\documentclass{article}

\usepackage{arxiv}

\usepackage[utf8]{inputenc} 
\usepackage[T1]{fontenc}    
\usepackage{hyperref}       
\usepackage{url}            
\usepackage{booktabs}       
\usepackage{amsfonts}       
\usepackage{amsmath}
\usepackage{nicefrac}       
\usepackage{microtype}      
\usepackage{lipsum}
\usepackage{graphicx}
\usepackage{bm}
\usepackage{natbib}
\usepackage{float}
\bibliographystyle{abbrvnat}
\usepackage{enumitem}
\usepackage{amsthm}

\newtheorem*{theorem}{Theorem}
\newtheorem*{corollary}{Corollary}


\newcommand*{\N}{\mathop{\mathrm{N}}\nolimits}

\newcommand*{\W}{\mathop{\mathrm{W}}\nolimits}

\newcommand*{\E}{\mathop{\mathrm{E}}\nolimits}

\newcommand*{\Cov}{\mathop{\mathrm{Cov}}\nolimits}
\newcommand*{\Corr}{\mathop{\mathrm{Corr}}\nolimits}
\newcommand*{\blkdiag}{\mathop{\mathrm{blkdiag}}\nolimits}
\newcommand*{\tr}{\mathop{\mathrm{tr}}\nolimits}
\newcommand*{\vect}{\mathop{\mathrm{vec}}\nolimits}
\newcommand*{\POS}{\mathop{\mathrm{POS}}\nolimits}

\title{Bayesian Multivariate Probability of Success Using Historical Data with Strict Control of Family-wise Error Rate}

\author{
 Ethan M. Alt \\
  Department of Biostatistics\\
  University of North Carolina\\
  Chapel Hill, NC 27599 \\
  \texttt{ethanalt@live.unc.edu} \\
   \And
 Matthew A. Psioda \\
  Department of Biostatistics\\
  University of North Carolina\\
  Chapel Hill, NC 27599 \\
  \texttt{matt\_psioda@unc.edu} \\
  \And
 Joseph G. Ibrahim \\
  Department of Biostatistics\\
  University of North Carolina\\
  Chapel Hill, NC 27599 \\
  \texttt{ibrahim@bios.unc.edu} \\
}

\begin{document}
\maketitle
\begin{abstract}
Given the cost and duration of phase III and phase IV clinical trials, the development of statistical methods for go/no-go decisions is vital. In this paper, we introduce a Bayesian methodology to compute the probability of success based on the current data of a treatment regimen for the multivariate linear model. Our approach utilizes a Bayesian seemingly unrelated regression model, which allows for multiple endpoints to be modeled jointly even if the covariates between the endpoints are different. Correlations between endpoints are explicitly modeled. This Bayesian joint modeling approach unifies single and multiple testing procedures under a single framework. We develop an approach to multiple testing that asymptotically guarantees strict family-wise error rate control, and is more powerful than frequentist approaches to multiplicity. The method effectively yields those of Ibrahim et al. and Chuang-Stein as special cases, and, to our knowledge, is the only method that allows for robust sample size determination for multiple endpoints and/or hypotheses and the only method that provides strict family-wise type I error control in the presence of multiplicity.
\end{abstract}


\begin{keywords}
Average probability of success;
Fitting prior;
Seemingly unrelated regression;
Multiplicity;
Power prior;
Validation prior.
\end{keywords}

\maketitle

\section{Introduction}
\label{sec1}

Clinical trials fail when there is a lack of statistical evidence that treatment is efficacious. One of the reasons for this is because the sample size was too small to establish statistical significance in the treatment effect when, in actuality, treatment is efficacious, i.e., the study is \emph{underpowered}. Sample size determination becomes more difficult in the presence of multiple hypotheses (\emph{e.g.}, co-primary endpoints, multiple primary endpoints,  or primary and secondary endpoints), which are common in clinical trials. Sample sizes are frequently determined based solely on establishing efficacy in the primary endpoint(s). However, if multiple hypotheses are being considered, it is recommended to determine a sample size that is likely to provide substantial evidence of efficacy across all hypotheses \citep{fda_multiple_2018}. Unfortunately, this is seldom done in practice, in part due to a lack of statistical tools that can be used to robustly determine sample size in the presence of multiple, potentially correlated endpoints.

Due to safety concerns, potential lack of efficacy, cost, and the high failure rate for Phase III clinical trials, it has become common for clinical trials sponsors to make go/no-go decisions on the basis of data from phase II trials quantify the likelihood of a successful Phase III clinical trial. Typically, decisions involved in study planning are made on the basis of classical hypothesis testing, controlling for type-I and type-II error rates by defining a pre-specified criteria for success or futility of the treatment. Sample sizes are often calculated based on an assumed treatment effect size and its variance, with success being confirmed if a significant $p$-value (say, $p$-value $< 0.05$) is obtained. However, variabilities in the treatment effect size and its variance are unincorporated in this traditional approach. 

As discussed by 
\cite{ohagan_assurance_2005}, 
\cite{chuang-stein_sample_2006}, and 
\cite{ibrahim_bayesian_2015},
probability of success (POS) can be interpreted as a weighted average of power across possible effect sizes, i.e., POS is the expectation of power with respect to some predictive distribution for future data. 
\cite{ohagan_assurance_2005} referred to POS as \emph{assurance}, and it is also known as \emph{Bayesian expected power} or simply \emph{Bayesian power} \citep{psioda_bayesian_2019}.

The literature for computing POS for the no-covariate setting is vast, with most methods using Bayesian techniques to sample future clinical data while using frequentist approaches to determine success for the simulated data sets. \citet{wang_evaluating_2013}, studying oncology trials, proposed to use the posterior distribution of the treatment effect to generate data for future trials, computing whether or not the trial had a significant $p$-value. \citet{ren_assurance_2014} developed methods to compute POS for a single time-to-event endpoint for both parametric and nonparametric models. \citet{ciarleglio_selection_2016} and \citet{ciarleglio_sample_2017} utilized a hybrid no-covariates method to compute POS using Bayesian conditional expected power (BCEP), which corresponds to POS conditioning on a non-null treatment effect, for superiority trials for normally distributed and binary endpoints, respectively.

In contrast, methods for computing POS in the presence of covariates are sparse. \citet{ibrahim_bayesian_2015} developed a fully Bayesian approach to compute POS for a single normally distributed endpoint in the presence of covariates. \citet{shieh_equivalence_2017} showed that, under certain conditions, so-called \emph{multiplier methods}, where multiples of the sample variance are used determine sample size, are equivalent to POS for the normal linear model.

Recently, some methods have been developed for computing POS for correlated outcomes or when multiple trials must be conducted (e.g., multiple Phase III or Phase II and III trials). \citet{zhang_joint_2013} developed a method to compute POS when multiple successful trials are required, noting that two successful phase III trials are needed for approval and that computing POS using common observed data from earlier studies yields correlated posterior treatment effects. \citet{yin_backward_2017} developed a Bayesian approach for determining go / no-go decisions for phase 2 clinical trials based on phase 3 efficacy and probability of success requirements. \citet{zhou_predictive_2018} developed closed-form equations for POS methods for interim monitoring in clinical trials with longitudinal outcomes. \citet{saint-hilary_predictive_2019} develop a method to compute probability of success using surrogate endpoints, which are common in oncology trials. However, none of these methods consider regression settings.

In this paper, we provide a method to compute POS for multiple continuous endpoints, where the regression model for each endpoint has its own set of covariates and the dependent variables can reasonably be modeled by a multivariate normal distribution. The method proposed includes those of \cite{ibrahim_bayesian_2015} and \cite{chuang-stein_sample_2006} as special cases. This fully Bayesian approach does not rely on $p$-values, hybrid approaches (e.g., \citet{chuang-stein_sample_2006}), or hierarchical testing procedures (e.g. \citet{holm_simple_1979}). The proposed Bayesian approach allows treatment effects to be sampled from the joint posterior density, which will typically be correlated due to the related nature of endpoints commonly used in clinical trials. We develop a Bayesian procedure to compute probability of success for unions of events that asymptotically guarantees strict family-wise error rate (FWER) control. While no closed forms are available, sampling maybe conducted via the direct Monte Carlo (DMC) method of \citet{zellner_direct_2010} under some circumstances and an efficient Gibbs sampling algorithm in others. 

The proposed Bayesian approach to testing has several advantages over frequentist approaches to multiplicity. First, since FWER is strictly controlled, power loss in the presence of multiplicity is not a concern. Typically, frequentist approaches for multiplicity adjustment, such as Bonferroni and Holm, are conservative, that is, they typically yield FWER less than $\alpha$, necessarily translating to decreased power. Other methods to handle multiplicity, such as the fixed sequence testing and Hochberg procedures, can be more powerful than those of Bonferroni and Holm. However, the former method requires a priori knowledge about the level of statistical significance of the endpoints and the latter requires the assumption that the tests are positively correlated, and they are not recommended by regulators \citep{fda_multiple_2018}. By contrast, our proposed method is invariant to the order of testing and correlations. Furthermore, in clinical settings, composite endpoints may be created to avoid frequentist multiplicity corrections due to the loss in power \citep{fda_multiple_2018}. The proposed approach allows clinicians to declare efficacy if the criteria at least one of several endpoints is met, while guaranteeing FWER of $\alpha$.

The rest of the paper is organized as follows. In Section 2, we introduce the Bayesian POS and develop the framework to compute POS for multiple continuous endpoints. Section 3 introduces a posterior probability adjustment method that asymptotically guarantees strict FWER of level $\alpha$ in the presence of union hypotheses. Section 4 presents simulation results, comparing the proposed method with traditional frequentist approaches. Section 5 presents results from an application of the proposed method using data from a real clinical trial. In Section 6, we close with some discussion.

\section{Bayesian probability of success for multivariate linear regression models}
\label{sec:posmethod}
\subsection{The general methodology}
Let $z_i$ denote the treatment indicator, such that $z_i = 1$ if the $i^{th}$ patient receives treatment and $z_i = 0$ otherwise. Further, let $\bm{x}_{2ij}$ denote a $p_j$-dimensional covariate vector for the $i^{th}$ individual and $j^{th}$ endpoint, $i = 1, \ldots, n$ and $j = 1, \ldots, J$, which may include an intercept term. Let the $j^{th}$ outcome variable for subject $i$ be represented by $y_{ij}$. We consider a two-arm a randomized controlled trial (RCT) and let
$D = \{ \left( y_{ij}, z_i, \bm{x}_{2ij} \right), i = 1, \ldots, n; j = 1, \ldots, J \}$ represent the future study data. We assume a linear regression model of the form
\begin{equation}
    y_{ij} = \beta_{1j} z_{i} + \bm{x}_{2ij}'\bm{\beta}_{2j} + \epsilon_{ij}, \hspace{1cm} i = 1, \ldots, n; j = 1, \ldots, J,
    \label{eq:indivreg}
\end{equation}
where 
$\beta_{1j}$ is the treatment effect for outcome $j$,
$\bm{\beta}_{2j}$ is a $p_j \times 1$ parameter vector corresponding to $\bm{x}_{2ij}$, and 
$\epsilon_{ij}$ is a random error term. 
We assume that $\bm{\epsilon}_i = (\epsilon_{i1}, \ldots, \epsilon_{iJ})' \sim \N_J(\bm{0}_J, \bm{\Sigma})$, where $\bm{0}_J$ is a $J$-dimensional vector of zeros and $\bm{\Sigma}$ is a covariance matrix assumed to be positive definite. We see from the model (\ref{eq:indivreg}) that each individual outcome $y_{ij}$ marginally has a linear regression model, but the $J$ outcomes are correlated with one another through the covariance matrix $\bm{\Sigma}$. This model is referred to in the literature as a seemingly unrelated regression (SUR) model \citep{zellner_direct_2010}, where the name derives from the fact that each marginal regression model is valid and potentially has its own set of covariates, but the outcomes are correlated. The SUR model is particularly useful for clinical trials, where multiple outcomes are obtained from the same experimental units (i.e. the trial participants) and hence are correlated, but where one might wish to control only for the baseline level of the endpoint being considered, so that the covariate vector for each outcome is different. Moreover, the Bayesian SUR model allows for sampling from the \emph{joint} posterior distribution of the regression coefficients, which allows joint inference on the treatment effects.

Let $\bm{y}_j = (y_{i1}, \ldots, y_{iJ})'$ denote the responses for outcome $j$. Let $\bm{X}_j = (\bm{x}_{1j}, \ldots, \bm{x}_{nj})'$ denote the $n \times p_j$ design matrix for outcome $j$ and let $\bm{z}_i = (z_{i1}, \ldots, z_{in})'$ denote the $n \times 1$ vector of treatment indicators. Then we can rewrite (\ref{eq:indivreg}) as
\[
    \bm{y}_j = \beta_{1j} \bm{z}_i + \bm{X}_{2j} \bm{\beta}_{2j} + \bm{\epsilon}_j, \hspace{1cm} j = 1, \ldots, J,
    \label{eq:outcomereg}
\]
where the $\bm{\epsilon}_j$'s are zero-mean, normally distributed error terms with $\Cov(\bm{\epsilon}_j, \bm{\epsilon}_{k}) = \sigma_{jk} \bm{I}_n$, where $\sigma_{jk}$ is the element in the $j^{th}$ row and $k^{th}$ column of the covariance matrix $\bm{\Sigma}$, and $\bm{I}_n$ is the $n$-dimensional identity matrix.

We can write the linear regression model even more succinctly. Let $\bm{y} = (\bm{y}_1, \ldots, \bm{y}_J)'$ be the $nJ \times 1$ vector of all responses. Further, let $\bm{X}_{2} = \blkdiag\{ \bm{X}_{21}, \ldots, \bm{X}_{2J} \}$ be the $nJ \times p$ block diagonal matrix of design matrices and let $\bm{Z} = \blkdiag\{\bm{z}, \ldots, \bm{z}\}$ be the $nJ \times J$ block diagonal matrix of treatment indicators, where $p = \sum_{j = 1}^J p_j$. Let $\bm{\beta}_1 = ({\beta}_{11}, \ldots, {\beta}_{1J})'$ and $\bm{\beta}_2 = (\bm{\beta}_{21}', \ldots, \bm{\beta}_{2J}')'$. We can write
\begin{equation}
    \bm{y} = \bm{Z}\bm{\beta}_1 + \bm{X}_2 \bm{\beta}_2 + \bm{\epsilon},
    \label{eq:jointregtrt}
\end{equation}
where $\bm{\epsilon} = (\bm{\epsilon}_1, \ldots, \bm{\epsilon}_J)'$ are the stacked error terms and $\Cov(\bm{\epsilon}) = \bm{\Sigma} \otimes \bm{I}_n$, where $\otimes$ is the Kronecker product operator. Let $\bm{\beta} = (\bm{\beta}_1^*, \ldots, \bm{\beta}_J^*)'$, where 
$\bm{\beta}_j^* = (\beta_{1j}, \bm{\beta}_{2j})'$ contains all regression parameters for outcome $j$. Let $\bm{X} = \blkdiag\{ \bm{X}_1^*, \ldots, \bm{X}_J^* \}$, where $\bm{X}_j^* = (\bm{z}, \bm{X}_{2j})$ is the $n \times (p_j + 1)$ design matrix for outcome $j$. Then we can write (\ref{eq:jointregtrt}) as
\[
    \bm{y} = \bm{X}\bm{\beta} + \bm{\epsilon}.
\]

For univariate endpoints, ``success'' is typically defined in terms of the treatment effect exceeding a certain threshold \citep{ibrahim_bayesian_2015}. However, a more general definition of success is required for multiple endpoints. The null hypothesis can be expressed as 
$H_0: \bm{\beta}_1 \in \bar{\Omega}$ versus 
$H_1: \bm{\beta}_1 \in \Omega$,
where $\Omega$ is a set that defines success and $\bar{\Omega}$ is the complement of $\Omega$. 
For example, consider $J = 1$, then success is achieved if $\beta_1 > \delta$, where $\delta \in \mathbb{R}$ is a target value. Here, $\Omega$ is just an interval, $\Omega = (\delta, \infty)$. We note that for $J > 1$, $\Omega$ can be quite complex. If $J = 2$ for example, we might define success if $\beta_{11} > \delta_1$, $\beta_{12} > \delta_2$, or $a_1 \beta_{11} + a_2 \beta_{12} > \delta_3$ for specified constants $a_1$ and $a_2$. For clinical trials with 3 or more endpoints, it is often desirable to establish efficacy in the primary endpoint and at least one of a collection of secondary endpoints. In this case, we have 
$\Omega = \{ \beta_{11} > \delta_1 \} \cap \left(\cup_{j = 2}^J \{ \beta_{1j} > \delta_j \} \right) $, where we assume without loss of generality that the first outcome is primary and the others are secondary.

Mathematically, we can define probability of success as
\begin{equation}
    \POS = \int \int
      P(\text{success} | z, \bm{x}, \bm{\theta}) 
      f(z)
      f(\bm{x} | \bm{\alpha})
      \pi^{(v)}(\bm{\theta}, \bm{\alpha}) 
      dz 
      d\bm{x} 
      d\bm{\theta} 
      d{\bm{\alpha}},
      \label{eq:pos}
\end{equation}
where $\bm{\theta} = (\bm{\beta}, \bm{\Sigma})$, $\bm{\alpha}$ is a vector of parameters for the covariate distribution, $f(z)$ is a known distribution determined by the randomization scheme, $f(\bm{x} | \bm{\alpha})$ is the covariate distribution, and $\pi^{(v)}(\bm{\theta}, \bm{\alpha})$ is a validation prior for $\bm{\theta}$ and $\bm{\alpha}$. For the purposes of this paper, the validation prior represents the posterior density with respect to a historical data set (e.g., phase II trial data). In the literature, the validation prior is also referred to as the sampling prior \citep{wang_simulation-based_2002} and the design prior \citep{ohagan_assurance_2005}. Moreover, we define
\begin{equation}
    P(\text{success} | z, \bm{x}, \bm{\theta}) =
      \E\left[1\left\{ P( \bm{\beta}_1 \in \Omega | D, \pi^{(f)} ) \ge \gamma \right\} | z, \bm{x}, \bm{\theta} \right],
      \label{eq:success}
\end{equation}
where $D = \{ \bm{y}, \bm{z}, x \}$ represents the future data and $P(\bm{\beta}_1 \in \Omega | D, \pi^{(f)})$ is the posterior probability that $\bm{\beta}_1$ lies within the region of success given the data from the future trial and the fitting prior $\pi^{(f)}$. The fitting prior is called so because it is the prior used to ``fit'' the future data. In words, (\ref{eq:pos}) is the marginal probability that $\bm{\beta}_1$ lies in the success region $\Omega_1$. Following \cite{ibrahim_bayesian_2015}, we assume independent validation priors for $\bm{\theta}$ and $\bm{\alpha}$ in (\ref{eq:pos}). Specifically,
\begin{equation}
    \pi^{(v)}(\bm{\theta}, \bm{\alpha}) = 
      \pi^{(v)}(\bm{\theta}) \pi^{(v)}(\bm{\alpha}).
      \label{eq:jointvalidprior}
\end{equation}
\subsection{Single historical data set}
Suppose that we have historical data 
$D_0 = \{ y_{0ij}, z_{0i}, \bm{x}_{0ij}, i = 1, \ldots, n_0; j = 1, \ldots, J \}$,
where $z_{0i}$ = 1 if the $i^{th}$ subject received treatment and 0 otherwise and
$\bm{x}_{0i}$ is the collection of unique covariates for subject $i$ across all outcomes $j = 1, \ldots, J$. We assume that $\bm{y}_{0} = \bm{X}_0 \bm{\beta} + \bm{\epsilon}$ where, using the notation established above,
$\bm{y}_0 = (\bm{y}_{01}, \ldots, \bm{y}_{0J})'$ is the $n_0J \times 1$ vector of responses, $\bm{X}_{0} = \blkdiag\{ \bm{X}_{01}^*, \ldots, \bm{X}_{0J}^* \}$ is a block diagonal matrix consisting of design matrices for each outcome, and $\bm{\epsilon}_0 \sim \N_{n_0 J}(\bm{0}_{n_0 J}, \bm{\Sigma} \otimes \bm{I}_{n_0})$.

The validation prior for $\bm{\theta}$ can be specified as
\begin{equation}
    \pi^{(v)}(\bm{\theta}) \propto
      \lvert \bm{\Sigma}^{-1} \rvert ^ {n_0/2}
      \exp\left\{-\frac{1}{2}(\bm{y}_0 - \bm{X}_0 \bm{\beta})' (\bm{\Sigma}^{-1} \otimes \bm{I}_{n_0}) (\bm{y}_0 - \bm{X}_0 \bm{\beta}) \right\} \pi_{10}^{(v)}(\bm{\theta})
      \label{eq:sur1},
\end{equation}
which we may alternatively express as
\begin{equation}
      \pi^{(v)}(\bm{\theta}) \propto \lvert \bm{\Sigma}^{-1} \rvert ^ {n_0/2}
         \exp\left\{-\frac{1}{2} \tr\left( \bm{R}_0 \bm{\Sigma}^{-1} \right) \right\}
         \pi_{10}^{(v)}(\bm{\theta}),
         \label{eq:sur2}
\end{equation}
where 
${(\bm{R}_{0})}_{kl} = (\bm{y}_{0k} - \bm{X}_{0k} \bm{\beta}_{k})'(\bm{y}_{0l} - \bm{X}_{0l} \bm{\beta}_{l})$ is a $J \times J$ matrix of dot product residuals,
and $\pi_{10}^{(v)}(\bm{\theta})$ is an initial validation prior for $\bm{\theta}$. To see how (\ref{eq:sur1}) equates to (\ref{eq:sur2}), note that we can write
$(\bm{y}_0 - \bm{X}_0 \bm{\beta})' (\bm{\Sigma} \otimes \bm{I}_{n_0}) (\bm{y}_0 - \bm{X}_0 \bm{\beta}) = \vect(\bm{Y}_0^* - \bm{X}_0^* \bm{B}^*)' (\bm{\Sigma} \otimes \bm{I}_{n_0}) \vect(\bm{Y}_0^* - \bm{X}_0^* \bm{B}^*)$, 
where $\bm{Y}_0^* = (\bm{y}_1, \ldots, \bm{y}_J)$ is the $n_0 \times J$ matrix of outcomes, 
$\bm{X}_0^* = (\bm{X}_{01}^*, \ldots, \bm{X}_{0J}^*)$ is the $n \times (p + J)$ matrix of concatenated design matrices, and $\bm{B}^* = \blkdiag\{\bm{\beta}_1^*, \ldots, \bm{\beta}_J^*\}$ is a $(p + J) \times J$ block diagonal matrix of regression coefficients. Using this relationship and the fact that $\tr(\bm{A}'\bm{BCD}') = \vect(\bm{A})'(\bm{D} \otimes \bm{B}) \vect(\bm{C})$, we have 
$(\bm{y}_0 - \bm{X}_0 \bm{\beta})' (\bm{\Sigma} \otimes \bm{I}_{n_0}) (\bm{y}_0 - \bm{X}_0 \bm{\beta}) = \tr((\bm{Y}_0^* - \bm{X}_0^* \bm{B}^*)' \bm{I}_{n_0} (\bm{Y}_0^* - \bm{X}_0^* \bm{B}^*) (\bm{\Sigma}^{-1})' ) = \tr( \bm{R}_0 \bm{\Sigma}^{-1} )$ since $\bm{R}_0 = (\bm{Y}_0^* - \bm{X}_0^* \bm{B}^*)'(\bm{Y}_0^* - \bm{X}_0^* \bm{B}^*)$.

We may specify an improper prior for $\pi_{10}^{(v)}(\bm{\theta})$ as
\begin{equation}
    \pi^{(v)}_{10}(\bm{\theta}) \propto \lvert \bm{\Sigma}^{-1} \rvert^{-\frac{J + 1}{2}}
    \label{eq:initvalidprior}.
\end{equation}

The validation prior (\ref{eq:sur1}) can be recognized as a posterior from a Bayesian SUR model. For an introduction to Bayesian analysis of the SUR model, we refer the reader to \cite{zellner_direct_2010}.

Following \cite{ibrahim_bayesian_2015}, we elicit the validation prior for $\bm{\alpha}$ via the power prior \citep{ibrahim_power_2015}. Specifically, we specify $\pi_2^{(v)}(\bm{\alpha})$ in (\ref{eq:jointvalidprior}) as
\begin{equation}
    \pi_{2}^{(v)}(\bm{\alpha}) = \pi_2^{(v)}(\bm{\alpha} | D_0, b_0)
       \propto \left[\prod_{i = 1}^{n_0} f(\bm{x}_{0i} | \bm{\alpha}) \right]^{b_0}
            \pi_{20}^{(v)}(\bm{\alpha}),
       \label{eq:alphavalidprior}
\end{equation}
where $0 < b_0 \le 1$ and $\pi_{20}^{(v)}(\bm{\alpha})$ is an initial validation prior. The quantity $b_0$ is a hyperparameter that downweights the likelihood of the historical data. Typically, we choose values of $b_0$ closer to 0 when characteristics of the study participants in the planned future study are substantially different than those in the completed trial and values close to 1 when the characteristics are similar. We assume covariates are generated by a generalized linear model (GLM).

As the covariates are dependent, it may be difficult to specify $f(\bm{x}_{0i} | \bm{\alpha})$. One way to do this is to utilize a factorization of the joint distribution informed by temporal relationships between the variables. For example, suppose we have three covariates $(x_{01}, x_{02}, x_{03})'$, where $x_{01}$ is a binary variable indicating gender, $x_{02}$ is weight measured in kilograms, and $x_{03}$ counts the number of tumors at baseline. It is impossible for weight or number of tumors to cause gender, so we might first generate $x_{01}$. Gender likely has an effect on both weight and number of tumors, but there may not be an obvious way to factorize $x_{02}$ and $x_{03}$. The two permutations may be compared, and a choice may be selected using model selection criteria (e.g., deviance information criterion (DIC) or Bayes factors). Supposing that such analysis yields $x_{02}$, we can generate from the joint distribution via $f(x_{01}, x_{02}, x_{03}) = f(x_{01}) f(x_{02} | x_{01}) f(x_{03} | x_{01}, x_{02})$. 

In general, suppose there are $L$ unique covariates $\{x_{0i1}, \ldots, x_{0iL}, i = 1, \ldots, n_0 \}$. Without loss of generality, suppose that they are ordered so that we want to generate samples of $x_{01}$ first, $x_{02}$ second, and so on, lastly generating samples from $x_{0L}$. Let $\bm{\alpha}_l$ denote the parameter vector for $f(x_{0l} | (x_{01}, \ldots, x_{0,l-1}) )$. We may rewrite the validation prior (\ref{eq:alphavalidprior}) as
\[
    \pi_2^{(v)}(\bm{\alpha} | D_0, b_0) \propto \prod_{l=1}^L \left[ \prod_{i = 1}^{n_0} 
    f(x_{0il} | (x_{0i1}, \ldots, x_{0i,l-1}, \bm{\alpha}_l) \right]^{b_0}
            \pi_{20}^{(v)}(\bm{\alpha}).
\]
Assume without loss of generality that the first $L_1$ covariates are from an exponential family with a dispersion parameter (e.g., Gaussian or gamma) and remaining $L_2 = L - L_1$ covariates have fixed dispersion parameters (e.g., Binomial or Poisson). Then we may specify the initial validation prior as
\begin{equation}
    \pi_{20}^{(v)}(\bm{\alpha}) \propto \prod_{l = 1}^{L_1} \phi_l^{\alpha_l - 1} \exp(-\gamma_l \phi_l),
    \label{eq:alphainitprior}
\end{equation}
where $\phi_l$ is the dispersion parameter for the $l^{th}$ covariate, $l = 1, 2, \ldots, L_1$. The prior (\ref{eq:alphainitprior}) corresponds to a uniform improper prior on the GLM regression coefficients and an a priori independent gamma prior on the dispersion parameters. Such a prior guarantees posterior propriety \citep{ibrahim_bayesian_1991}. For the purposes of our simulations and data analysis, we specify the weakly informative hyperparameters $\alpha_{0l} = \gamma_{0l} = 0.1$.

Since $P(\text{success} | z, \bm{x}, \bm{\theta})$ in (\ref{eq:success}) is independent of $\bm{\alpha}$, the fitting prior only needs to be expressed in terms of $\bm{\theta}$. We take an improper prior
\[
    \pi^{(f)}(\bm{\theta}) \propto \lvert \bm{\Sigma}^{-1} \rvert ^ {- \frac{J+1}{2} }.
\]

Thus, the joint posterior distribution for $\bm{\theta}$ given the future study data $D$ is
\begin{equation}
    \pi(\bm{\theta}| D, \pi^{(f)}) \propto 
     \lvert \bm{\Sigma}^{-1} \rvert^{n/2} 
       \exp\left\{ -\frac{1}{2} 
         (\bm{y} - \bm{X} \bm{\beta})' 
         (\bm{\Sigma}^{-1} \otimes \bm{I}_n)
         (\bm{y} - \bm{X} \bm{\beta})
       \right\}
     \lvert \bm{\Sigma}^{-1} \rvert^{-\frac{J+1}{2}}.
     \label{eq:futurepost}
\end{equation}

\cite{zellner_introduction_1971} notes that the marginal posterior $\pi(\bm{\beta} | D, D_{01}, a_0)$ does not have a known distribution due to the covariate matrices for each outcome $\{\bm{X}_1^*, \ldots, \bm{X}_J^* \}$ possibly being different. However, an efficient Gibbs sampling algorithm may be used to sample from the posterior (\ref{eq:futurepost}).

\subsection{Two historical data sets}
Suppose that we have two historical data sets
$$D_{0k} = \left\{(y_{0kij}, \bm{x}_{0ki}, z_{0ki}), i = 1, \ldots, n_{0k}; j = 1, \ldots, J; k = 1, 2 \right\},$$
where
$z_{0ki} = 1$ if the $i^{th}$ subject in study $k$ received treatment and 0 otherwise,
$\bm{x}_{0ki}$ represents the vector of unique covariates for subject $i$ in study $k$,
$y_{0kij}$ is the response variable for individual $i$ and outcome $j$ in study $k$,
$k = 1,2$, $j = 1, \ldots, J$, $i = 1, \ldots, n_{0k}$.

We assume $\bm{y}_{0k} = \bm{X}_{0k} \bm{\beta} + \bm{\epsilon}$ where 
$\bm{y}_{0k} = (\bm{y}_{0k1}', \ldots, \bm{y}_{0kJ}')'$ is a $nJ$-dimensional response vector of all responses in study $k$,
$\bm{X}_{0k} = \blkdiag\{ \bm{X}_{0k1}^*, \ldots, \bm{X}_{0kJ}^* \}$ is a block diagonal matrix of design matrices for each outcome in study $k$,
and $\bm{\epsilon}_{0k} = (\bm{\epsilon}_{0k1}', \ldots, \bm{\epsilon}_{0kJ})'$ is a $n_{0k}J \times 1$ vector of error terms for each endpoint in study $k$ with $\bm{\epsilon}_{0k} \sim \N_{n_{0k}J}(\bm{0}_{n_{0k}}, \bm{\Sigma} \otimes \bm{I}_{n_{0k}})$

We use the historical data $D_{01}$ via the power prior to elicit the fitting prior
\[
  \pi^{(f)}_{1}(\bm{\theta}) =
  \left[ 
    \lvert \bm{\Sigma} \rvert^{n_{01} / 2} \exp\left\{
      -\frac{1}{2} (\bm{y}_{01} - \bm{X}_{01} \bm{\beta})'
      \left( \bm{\Sigma}^{-1} \otimes {\bm{I}}_{n_{01}} \right)
       (\bm{y}_{01} - \bm{X}_{01} \bm{\beta})
    \right\}
  \right]^{a_0}
  \pi_{10}^{(f)}(\bm{\theta}),
  \label{eq:fitprior}
\]
and we may take $\pi_{10}^{(f)}(\bm{\theta})$ as
\[
    \pi_{10}^{(f)}(\bm{\theta}) \propto \lvert \bm{\Sigma}^{-1} \rvert^{-\frac{J+1}{2}}
    \label{eq:initfitprior}
\]

The joint posterior of the future data is given by
\begin{equation}
    \pi(\bm{\theta} | D, \pi^{(f)}_1) \propto
    \lvert \bm{\Sigma}^{-1} \rvert^{n/2}
    \exp\left\{ -\frac{1}{2}
      (\bm{y} - \bm{X \beta})' (\bm{\Sigma}^{-1} \otimes \bm{I}_n) 
        (\bm{y} - \bm{X \beta})
    \right\}
    \pi_1^{(f)}(\bm{\theta}).
    \label{eq:postfitprior}
\end{equation}

As previously mentioned, the marginal posterior of $\bm{\beta}$ has no closed form. Furthermore, the DMC approach cannot be used here because it requires priors to be noninformative in $\bm{\Sigma}$.  However, the conditional posteriors $\bm{\beta} | D, \bm{\Sigma}^{-1}$ and $\bm{\Sigma}^{-1} | D, \bm{\beta}$ are normal and Wishart, respectively. That is,
\[
    \pi \left( \bm{\beta} | \bm{\Sigma}, D, \pi_1^{(f)} \right) \propto 
    \N_p \left( \bm{\beta} | \tilde{\bm{\beta}}, \tilde{\bm{\Sigma}}_{\bm{\beta}} \right) \label{eq:postfit1}
\]
and
\[
    \pi \left( \bm{\Sigma} | \bm{\beta}, D, \pi_1^{(f)} \right) \propto 
      \W_J \left( \bm{\Sigma}^{-1} | \tilde{\bm{V}}, \tilde{\alpha} \right),
\]
where 
$\N_p(\cdot, \bm{a}, \bm{B})$ denotes the $p$-dimensional normal density with mean $\bm{a}$ and positive-definite covariance matrix $\bm{B}$,
\begin{equation}
{\tilde{\bm{\Sigma}}}_{\bm{\beta}} = 
\left[ %
\bm{X}' \left( \bm{\Sigma}^{-1} \otimes \bm{I}_n \right) \bm{X} + %
a_0^{-1} \bm{X}'_{01} \left( \bm{\Sigma}^{-1} \otimes \bm{I}_{n_{01}} \right) \bm{X}_{01} %
\right]^{-1}
\end{equation}
and
\begin{equation}
\tilde{\bm{\beta}} = \tilde{\bm{\Sigma}}_{\bm{\beta}} 
\left[ 
 \bm{X}'\left( \bm{\Sigma}^{-1} \otimes \bm{I}_n \right) \bm{X} \hat{\bm{\beta}}
+
a_0^{-1} \bm{X}_{01}' \left( \bm{\Sigma}^{-1} \otimes \bm{I}_{n_{01}} \right) \bm{X}_{01} \hat{\bm{\beta}}_{01}
\right],
\end{equation}
where $\W_q(\cdot | \bm{C}, d)$ denotes the $q$-dimensional Wishart probability density function with scale matrix $\bm{C}$ and $d$ degrees of freedom,
$\tilde{\bm{V}} = \tilde{\bm{V}}(\bm{\beta}) = \left[ \bm{R}(\bm{\beta}) + a_0 \bm{R}_0(\bm{\beta}) \right]^{-1}$,
$\tilde{\alpha} = n + a_0 n_0$,
and $\bm{R}$ and $\bm{R}_0$ are the matrices of dot-product residuals for the future data and historical data respectively as described in (\ref{eq:sur2}), and $\hat{\bm{\beta}}$ and $\hat{\bm{\beta}}_{01}$ respectively are the maximum likelihood estimators for $\bm{\beta}$ for the current and historical data, respectively. Thus, Gibbs sampling can be used to sample from the posterior (\ref{eq:postfitprior}). A formal derivation is presented in the Appendix A.

The weight parameter $a_0$ can be treated as either fixed or random. For computational reasons, we consider $a_0$ as fixed. \cite{ibrahim_power_2015} give a through review of the power prior and a discussion of treating $a_0$ as fixed and random.

The validation prior for $\bm{\theta}$ can be specified as
\begin{equation}
    \pi_1^{(v)}(\bm{\theta}) \propto 
      \lvert \bm{\Sigma}^{-1} \rvert ^ {n_{02}/2}
      \exp\left\{-\frac{1}{2}(\bm{y}_{02} - \bm{X}_{02} \bm{\beta})' (\bm{\Sigma}^{-1} \otimes \bm{I}_{n_{02}}) (\bm{y}_{02} - \bm{X}_{02} \bm{\beta}) \right\} \pi_{10}^{(v)}(\bm{\theta}),
      \label{eq:validprior}
\end{equation}
where the initial validation prior is elicited as (\ref{eq:initvalidprior}). The validation prior represents the posterior distribution for $\bm{\theta}$ of the most recently completed study. In principle, the older historical data could be utilized as the validation prior. However, we adopt the convention of \citet{ibrahim_bayesian_2015} and utilize the most recently completed study for the validation prior. Typically, the most recently completed study will have the larger sample size and will be similar in spirit to the future study. For example, suppose a second Phase III study is being conducted, and we possess historical data from the Phase II study $D_{01}$ and the first Phase III study $D_{02}$. Since the second Phase III study will be similar to the first, it makes sense to generate future data sets based on $D_{02}$, while $D_{01}$ may be utilized at the fitting prior stage to inform the treatment effects.

The validation prior for $\bm{\alpha}$ may be specified as
\begin{equation}
    \pi^{(v)}(\bm{\alpha} | D_{01}, D_{02}, b_{01}, b_{02} )
    \propto
    \left[ \prod_{i = 1}^{n_{01}} f(\bm{x}_{01i} | \bm{\alpha}) \right]^{b_{01}}
    \left[ \prod_{i = 1}^{n_{02}} f(\bm{x}_{02i} | \bm{\alpha}) \right]^{b_{02}}
    \pi_{20}^{(v)}(\bm{\alpha}),
    \label{eq:alphavalidprior2}
\end{equation}
where $\pi_{20}^{(v)}(\bm{\alpha})$ is an initial validation prior. The hyperparameters $b_{01}$ and $b_{02}$ can be chosen in such a way that reflects how similar the study participants in the historical studies are compared to the planned future study. For example, if the patient characteristics in the more recent historical study will be similar to that of the future study and participants of the older historical study will be more different, we may choose $0 \le b_{01} < b_{02} \le 1$. In general, it is advisable to do sensitivity analysis on values of $b_{01}$ and $b_{02}$ to ensure changing their values does not drastically change the resulting POS. Using a mechanism similar to the single data set setting, we may rewrite (\ref{eq:alphavalidprior2}) as
\[
     \pi^{(v)}(\bm{\alpha} | D_{01}, D_{02}, b_{01}, b_{02} ) 
     \propto \prod_{k=1}^2 \prod_{l=1}^L
         \left[ \prod_{i = 1}^{n_{0k}} f(x_{0kil} | x_{0ki1}, \ldots x_{0ki,(l-1)}, \bm{\alpha}_l) \right]^{b_{0k}}
    \pi_{20}^{(v)}(\bm{\alpha}),
    \label{eq:validprior2v2}
\]
where $x_{0kil}$ is the $l^{th}$ covariate from study $k$ for the $i^{th}$ subject and $\alpha_l$ is the parameter vector for the $l^{th}$ covariate given all the covariates before it.

In our development, the validation prior was only discounted for the covariates, while the fitting prior allowed discounting for $\bm{\theta}$. The reasons for doing so is because the validation prior for $\bm{\theta}$ represents belief about the effects to be studied. It does not have any impact on the analysis of a future data set. If the validation prior for $\bm{\theta}$ is discounted, it becomes more likely that unrealistic values for $\bm{\theta}$ are generated, which may underestimate POS. On the other hand, the fitting prior for $\bm{\theta}$ is discounted because the fitting prior is the prior utilized in the analysis of the future data, and the future data should have more weight than the historical data. We allow discounting for the covariate parameters $\bm{\alpha}$ in the validation prior because, in practice, the covariates and characteristics of trial participants from a future trial could be quite different than those from historical data. In practice, discounting the covariate distribution has a marginal effect on POS, as shown in our simulations and data analysis.

\subsection{POS Algorithm}
The algorithm to compute probability of success is as follows:

\begin{enumerate}[leftmargin = 2cm, label=Step \arabic*:]
  \item Specify $n$, $\Omega$, $\gamma$, $q$ (the proportion of subjects randomized to the test drug group), the hyperparameters in the initial validation prior $\pi_2(\bm{\alpha})$, $M$ (the number of samples to draw), and $B$ (the number of future data sets).
  \item Generate $\bm{\theta}$ and $\bm{\alpha}$ from the validation prior $\pi^{(v)}(\bm{\theta}, \bm{\alpha})$.
  \item Using the value of $\bm{\alpha}$ from step 1, generate $\bm{x}_i$ from $f(\bm{x}_i | \bm{\alpha})$ and randomly assign the $i^{th}$ subject into two treatment arms by generating $z_i \sim \text{Bernoulli}(q)$ independently for $i = 1, \ldots, n$ and set $\bm{X}$ to be the resulting $nJ \times (p + J)$ design matrix.
  \item Using the value of $\bm{\theta}$ from Step 1 and the value of $\bm{X}$ from Step 2, generate a future response vector $\bm{y}$ from $\N_{nJ}(\bm{X\beta}, \bm{\Sigma}^{-1} \otimes \bm{I}_n)$ to obtain $D = (\bm{y}, \bm{X})$.
  \item Using $D$ from Step 4, use the posterior distribution using the fitting prior to generate $\left\{ \bm{\theta}^{(m)} = (\bm{\beta}^{(m)}, \bm{\Sigma}^{(m)}) \right\}, m = 1, \ldots, M$
  \item Compute $I^{(b)} = 1\left[ \frac{1}{M} \sum_{m = 1}^M 1\left\{ \bm{\beta}^{(m)} \in \Omega \right\} \ge \gamma \right]$ using $\bm{\theta}^{(m)}$, $m = 1, \ldots, M$, from Step 5
  \item Repeat Step 2 to Step 6 $B$ times to obtain $I^{(1)}, \ldots, I^{(B)}$. Probability of success is given by $\POS = \frac{1}{B} \sum_{b=1}^B I^{(b)}$.
\end{enumerate}


\section{Family-wise error rate control for multiplicity}
\label{sec:fwer}
Computing POS as described will not guarantee strong control of FWER when the set that defines success, $\Omega$, contains unions of events. For example, suppose $J=3$ and $\Omega = \{ \beta_{12} > 0 \cup \beta_{13} > 0 \}$. Suppose further that $\beta_{12}$ and $\beta_{13}$ are restricted to be null. That is, the validation prior (\ref{eq:validprior}) is augmented to
\begin{equation}
    \tilde{\pi}_{1}^{(v)}(\bm{\theta}) = \pi_{1}^{(v)}(\bm{\theta}) 1\{ \bm{\theta} \in \bm{\Theta}_0 \},
    \label{eq:vpriornull}
\end{equation}
where $\Theta_0 = \{ (\bm{\beta}, \bm{\Sigma}) : \beta_{12} = 0 \cap \beta_{13} = 0 \}$. Let $E_j = \{ \beta_{1j} > 0 \}$ for $j = 1, 2, 3$. Due to the asymptotic uniformity of the posterior probability, $P(\beta_{1j} > 0 | D_0, \pi_1^{(v)}) \to 0.5$ as $n_0 \to \infty$ and $\POS(E_j) \to \alpha = 1 - \gamma$ as $n \to \infty$ for $j = 1, 2, 3$, where $\POS(E)$ refers to computing POS where the event $E$ is the success criteria, i.e., computing POS by substituting $E$ for $\Omega$ in (\ref{eq:success}), and $\gamma$ is the threshold achieve success for $\Omega$, e.g., $\gamma = 0.95$. Then $\POS(E_2 \cup E_3) = \POS(E_2) + \POS(E_3) - \POS(E_2 \cap E_3)$. Asymptotically, the first two terms converge to $\alpha$, but the third term converges to a quantity $\alpha^* \le \alpha$ with equality holding if and only if $E_2 \cup E_3 = \E_2 \cap E_3$, i.e., if and only if $\Corr(\beta_{12}, \beta_{13}) \to 1$. Hence, under the validation prior (\ref{eq:vpriornull}), family-wise type I error rate is inflated for union hypotheses.

Consider replacing $\POS(E)$ with the adjustment $\POS^*(E) = \max\{ \alpha, \POS(E) \}$. Then $\POS^*(E_2 \cup E_3) = \POS(E_2) + \POS(E_3) - \POS(E_2 \cap E_3) \to \alpha$ as $n \to \infty$. Thus, replacing $\POS$ with $\POS^*$ asymptotically guarantees precise control of family-wise error rate (FWER). This leads to the following theorem
\begin{theorem}
  Let $\Omega = \cup_{k=1}^K \{ \beta_{1k} > 0 \}$ for some $2 \le K \le J$ with the validation prior given by (\ref{eq:vpriornull}). Then
  \begin{enumerate}
      \item $\POS(\Omega) \ge 1 - \gamma$ with equality holding if and only if $\Corr(\beta_{1j}, \beta_{1k}) \to 1$ as $n \to \infty$ for every $1 \le j \le K, k \ne j$.
      \item $\POS^*(\Omega) \to 1 - \gamma$ as $n \to \infty$.
  \end{enumerate}
\end{theorem}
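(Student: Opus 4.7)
The plan is to combine Bernstein--von Mises asymptotics for the posterior of $\bm{\beta}_1$ with monotonicity of posterior probabilities under set inclusion and an inclusion--exclusion bookkeeping argument. Under the restricted validation prior $\tilde{\pi}_{1}^{(v)}$ in (\ref{eq:vpriornull}), the data-generating value of $\bm{\beta}_1$ satisfies $\beta_{1k} = 0$ for each $k \in \{1,\ldots,K\}$. Applying Bernstein--von Mises to the Bayesian SUR model with the noninformative fitting prior, the posterior of $\bm{\beta}_1 \mid D$ is asymptotically $\N_K(\hat{\bm{\beta}}_1, V/n)$ for the limiting inverse-information block $V$. Writing $\hat{Z}_k := \sqrt{n}\,\hat{\beta}_{1k}/\sqrt{V_{kk}}$, this gives $P(\beta_{1k} > 0 \mid D) = \Phi(\hat{Z}_k) + o_p(1)$, and under the null the frequentist law of $\hat{\bm{Z}}$ converges to $\N_K(\bm{0}, R)$ with $R$ the limiting posterior correlation matrix of $\bm{\beta}_1$. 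In particular $\Phi(\hat{Z}_k)$ is asymptotically uniform on $(0,1)$, so $\POS(E_k) \to 1 - \gamma$ for every $k$.

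For Part 1, since $E_k \subseteq \Omega$, posterior monotonicity gives $P(\Omega \mid D) \geq P(E_k \mid D)$, so the data-space event $\{P(\Omega \mid D) \geq \gamma\}$ contains $\{P(E_k \mid D) \geq \gamma\}$ and hence $\POS(\Omega) \geq \POS(E_k) \to 1 - \gamma$. If every pairwise correlation tends to $1$, $\hat{\bm{Z}}$ degenerates to $Z \cdot \bm{1}_K$ with $Z \sim \N(0,1)$, so all $P(E_k \mid D)$ collapse to $\Phi(Z)$ and $P(\Omega \mid D)$ equals any $P(E_k \mid D)$ in the limit, giving equality. Conversely, if some limiting $R_{jk} < 1$, then $\N_K(\bm{0}, R)$ places positive mass on configurations where $P(\Omega \mid D) \geq \gamma$ while every individual $P(E_k \mid D) < \gamma$, producing strictly positive excess in $\POS(\Omega)$ beyond $\POS(E_{k_0})$.

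For Part 2, adopt the inclusion--exclusion extension of the max adjustment suggested by the paragraph preceding the theorem:
\[
\POS^*(\Omega) \;=\; \sum_{\emptyset \neq S \subseteq \{1,\ldots,K\}} (-1)^{|S|+1}\,\POS^*\!\Bigl(\textstyle\bigcap_{k \in S} E_k\Bigr).
\]
For any nonempty $S$ and any fixed $k_0 \in S$, the containment $\bigcap_{k \in S} E_k \subseteq E_{k_0}$ gives $\POS(\bigcap_{k \in S} E_k) \leq \POS(E_{k_0}) \to 1 - \gamma$, so $\POS^*(\bigcap_{k \in S} E_k) = \max\{1 - \gamma,\, \POS(\bigcap_{k \in S} E_k)\} \to 1 - \gamma$. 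Since $\sum_{\emptyset \neq S \subseteq \{1,\ldots,K\}} (-1)^{|S|+1} = 1 - (1-1)^K = 1$, summing yields $\POS^*(\Omega) \to 1 - \gamma$.

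The principal obstacle is the strict-inequality direction in Part 1: establishing rigorously that $R_{jk} < 1$ in the limit forces $\POS(\Omega) > 1 - \gamma$. This requires identifying a region of $\mathbb{R}^K$ of positive $\N_K(\bm{0}, R)$-measure on which the multivariate orthant probability $P(\Omega \mid D)$ exceeds $\gamma$ while every marginal $P(E_k \mid D) < \gamma$, and verifying that such a region exists precisely when $R$ fails to be rank-one. The remaining pieces --- Bernstein--von Mises for the SUR posterior and the alternating-sum simplification --- are essentially bookkeeping.
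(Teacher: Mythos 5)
Your proposal is correct in substance and shares the paper's backbone (Bernstein--von Mises normality of the posterior, asymptotic uniformity of each marginal posterior probability $P(E_k \mid D)$ under the null-restricted validation prior, and the alternating binomial sum $\sum_{k=1}^{K}(-1)^{k-1}\binom{K}{k}=1$ for Part 2, which is exactly the paper's Theorem A.2), but your route to Part 1 is genuinely different and in fact cleaner. The paper proves the inequality by asserting the identity $\POS(E_1\cup E_2)=\POS(E_1)+\POS(E_2)-\POS(E_1\cap E_2)$ and then inducting on $K$; since $\POS$ is defined as the probability that a posterior probability crosses the threshold $\gamma$, it is not additive over the event algebra, so that identity is really only a heuristic, whereas your monotonicity argument ($E_k\subseteq\Omega$ implies $\{P(E_k\mid D)\ge\gamma\}\subseteq\{P(\Omega\mid D)\ge\gamma\}$ pointwise in $D$, hence $\POS(\Omega)\ge\POS(E_k)\to 1-\gamma$) is rigorous and shorter. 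Your framing via $\hat{Z}\to\N_K(\bm{0},R)$ and $P(\Omega\mid D)\approx 1-\Phi_R(-\hat{Z})$ also makes the two-stage structure (validation draw, future data, fitting-prior posterior) more explicit than the paper does. The one place where you are no further along than the paper is the strict-inequality characterization: the paper simply asserts that equality forces $\rho_{jk}\to 1$, and you correctly flag this as the remaining gap; it can be closed by noting that for $R_{jk}<1$ the joint orthant probability satisfies $\Phi_R(-w)<\min_k\Phi(-w_k)$ strictly, so by continuity there is an open set of $w$ near $(z_\gamma,\ldots,z_\gamma)$ of positive Gaussian measure on which $P(\Omega\mid D)\ge\gamma$ while every $P(E_k\mid D)<\gamma$. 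For Part 2, your reading of $\POS^*(\Omega)$ as the inclusion--exclusion sum with each intersection term replaced by $\max\{1-\gamma,\POS(\cdot)\}$ is the correct interpretation of the paper's adjustment, and your argument there coincides with the paper's.
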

A proof of Theorem 1 for a general model is provided in Appendix B. Note that we can write $\POS^*(E_2 \cup E_3) = \POS(E_2 \cup E_3) - \max\{ 0, \alpha - \POS(E_2 \cap E_3) \}$, so that we may view $\POS^*()$ as POS with a penalty term, where the penalty is only applied when $\POS(E_2 \cap E_3) < \alpha$, i.e., when there is substantial evidence that $\beta_{12} \le 0$ and $\beta_{13} \le 0$. The adjustment can be applied to compositions of unions and intersections. For example, suppose $\Omega = \{ E_1 \cap \{ E_2 \cup E_3 \} \} \}$, corresponding to meeting the success criterion for the primary endpoint and \emph{at least one} of the secondary endpoints. We can write 
$\Omega = \{ \{ E_1 \cap E_2 \} \cup \{ E_1 \cap E_3 \} \}$, 
and we can then apply the method above to each intersected set. A formal proof for any finite amount of intersections of a finite amount of unions of events is presented in Appendix B.

We note that while correcting the posterior probabilities in such a way has a frequentist motivation, multiplicity corrections are necessary even from a Bayesian perspective (e.g., \citet{scott_bayes_2010}). For example, if $\Corr(\beta_{12}, \beta_{13}) = 0$ and $\Omega = \{ \beta_{12} > 0 \cup \beta_{13} > 0 \}$, $P(\Omega | D_0, \pi_1^{(v)}) \to 0.75$ as $n_0 \to \infty$. Thus, any analysis involving unions of hypotheses will favor the alternative hypothesis, even if the validation prior is specified from the null as that in (\ref{eq:vpriornull}).


\section{Simulation study}
The Comprehensive Post-Acute Stroke Services (COMPASS) \citep{duncan_comprehensive_2017} study was a two-arm, cluster randomized pragmatic trial to evaluate whether a novel care model (i.e., the COMPASS intervention) functional outcomes at 90 days compaerd to usual care (UC). We base our simulations based on the COMPASS study's summary statistics. The complete historical data consisted $n_0 = 981$ participants and $J = 3$ endpoints. We fix all power prior parameters to 1 and assume the following model to generate the historical data:
\[
    \bm{y}_{0i} = \bm{\beta}_1 z_{0i} + \bm{X}_{0i} \bm{\beta}_2 + \bm{\epsilon}_{0i}, \hspace{1cm} i = 1, 2, \ldots, 981,
    \label{eq:gendata}
\]
where 
$\bm{y}_{0i} = (y_{0i1}, y_{0i2}, y_{0i3})'$ 
is a vector of responses corresponding to the $J=3$ endpoints, $\bm{\beta}_1 = (\beta_{11}, \beta_{12}, \beta_{13})'$ 
is the vector of treatment effects, 
$\bm{X}_{0i} = \blkdiag\{ \bm{x}_{01}', \bm{x}_{02}', \bm{x}_{03}' \}$ is a $n_0 J \times p$ block diagonal matrix of controls for endpoints 1, 2, and 3, 
$\bm{\beta}_2 = (\bm{\beta}_{21}', \bm{\beta}_{22}', \bm{\beta}_{23}')'$ is a $p$-dimensional vector of effects for the controls of each endpoint, and 
$\bm{\epsilon}_{0i} \sim N_3(\bm{0}, \bm{\Sigma})$, where 
$\bm{\Sigma}$ is constructed to be positive definite. For the purposes of the simulation, all three endpoints shared the same seven covariates and each had an intercept term ($p = 24)$. The treatment vector was assumed to be 
$\bm{\beta}_1 = (0.0333, 0.1667, 0.5980)'$. 
The standard deviations for endpoints 1, 2, and 3 were, respectively, 0.193, 0.748, and 7.422. We took half the variance from the historical data to keep the computational time manageable due to the different simulation scenarios. Several different correlation matrices were considered. Let $\bm{\rho} = (\rho_{12}, \rho_{13}, \rho_{23})'$ where each $|\rho_{ij}| < 1$ and $\rho_{ij} = \Corr(y_{0ij}, y_{0ik})$ for $i = 1, \ldots, n_0$ and $1 \le j < k \le 3$. The pairwise correlations considered were 
$\bm{\rho}_{\text{HN}} = (-0.3, -0.4, -0.7)'$, 
$\bm{\rho}_{\text{LN}} = (-0.05, -0.1, -0.2)'$,
$\bm{\rho}_{\text{ind}} = (0, 0, 0)'$,
$\bm{\rho}_{\text{LN}} = (0.05, 0.1, 0.2)'$, and
$\bm{\rho}_{\text{HP}} = (0.3, 0.4, 0.7)'$, which correspond to high negative, low negative, independent, low positive, and high positive correlations, respectively.

We compare the family-wise error rate and power of our method against frequentist approaches, where we focus on the Holm procedure described below. The philosophy behind the frequentist approach to family-wise error rate is to maintain a type I error rate no larger than $\alpha = 1- \gamma$ whenever \emph{at least one} of the null hypotheses is true.

We explore several definitions of success. We define the following sets: 
$\Omega_{j|k} = \{ \bm{\theta} : \beta_{1j} > 0 \cup \beta_{2j} > 0 \}$ for $j \ne k$ and 
$\Omega_{j(k|l)} = \{ \bm{\theta} : \beta_{1j} > 0 \cap \{ \beta_{1k} > 0 \cup \beta_{1l} > 0 \} \}$ for $j = 1, 2, 3, j \ne k < l$. The first set, $\Omega_{j|k}$, corresponds to the case where there are two endpoints defined to be primary, and the trial is declared successful whenever at least one of them is statistically significant. The set $\Omega_{j(k|l)}$ corresponds to achieving the success in the primary endpoint $y_{.j}$ and at least one of two secondary endpoints, $y_{.k}$ and $y_{.l}$. Success for $\Omega_{j|k}$ is declared utilizing the Holm procedure whenever $\min\{p_j, p_k\} < (1 - \gamma)/2$ or $\max\{ p_j, p_k \} < 1 - \gamma$, where $p_j$ and $p_k$ respectively denote the $p$-values for the future data utilizing a frequentist marginal linear regression model. Similarly, utilizing the intersection union test (IUT) principle, success using the Holm procedure for testing $\Omega_{i(j|k)}$ is established when
$p_i < \alpha$ and either $\min\{ p_j, p_k \} < \alpha/2$ or $\max\{ p_j, p_k \} < \alpha$. Bayesian analysis of the SUR model was conducted via the surbayes package (2020) \nocite{noauthor_surbayes_nodate} in R. 

The validation priors utilized to obtain type I error and Bayesian conditional expected power (BCEP) are, respectively,
\[
    \pi_{\overline{\Omega}}^{(v)}(\bm{\theta}) \propto       
      \pi_{1}^{(v)}(\bm{\theta})
      1\left\{ \bm{\theta} \in \overline{\Omega} \right\}
      \label{eq:nullprior}
\]
and
\begin{equation}
   \pi_{\Omega}^{(v)}(\bm{\theta}) \propto       
      \pi_{1}^{(v)}(\bm{\theta})
      1\left\{ \bm{\theta} \in \Omega \right\},
      \label{eq:altprior}
\end{equation}
where $\Omega$ is the set that defines success and $\overline{\Omega}$ is the set defined by the boundary of the null hypothesis, where type I error rate is at its maximum, i.e., $\overline{\Omega}_{j|k} = \{ \bm{\theta} : \beta_{1j} = 0 \cap \beta_{1k} = 0 \}$ and $\overline{\Omega}_{i(j|k)} = \overline{\Omega}_{j|k}$. Computing POS utilizing the validation prior (\ref{eq:altprior}) is referred to as Bayesian conditional expected power because it is conditional on the alternative hypothesis being true.

The first column of Figure~\ref{fig:sim1or2} depicts FWER for the adjusted and unadjusted fully Bayesian methods and the hybrid Holm approach where the success criteria is $\Omega_{1|2}$, i.e., the trial is declared successful when the treatment effects for at least one of endpoints 1 and 2 are significant. The unadjusted version of POS has inflated FWER, but the hybrid Holm and adjusted fully Bayesian POS have FWER control, albeit the Holm approach is slightly more conservative. The adjusted and unadjusted Bayesian approaches are virtually identical in terms of power. This is not surprising because the two methods differ only when there is substantial evidence that the treatment effects for endpoints 1 and 2 are both null. However, the validation prior (\ref{eq:altprior}) precludes this from happening as at least one of the treatment effects must be positive.

\begin{figure}
    \centering
    \includegraphics[width = \textwidth, height=\textheight, keepaspectratio]{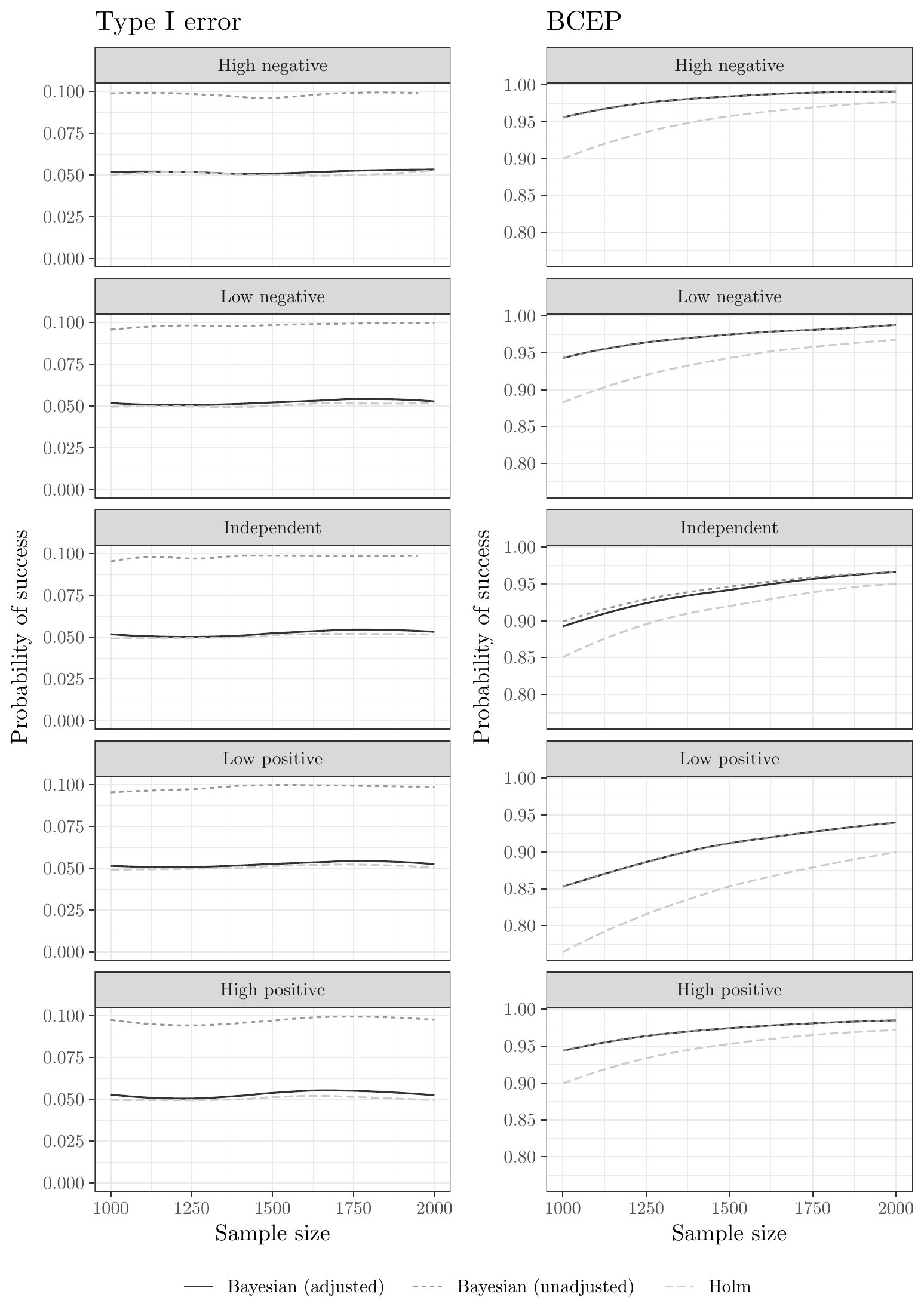}
    \caption{Type I error rate and Bayesian conditional expected power (BCEP) of adjusted and unadjusted fully Bayesian POS with a hybrid Holm procedure where $\Omega_{1|2}$ defines the set that determines success and when $\gamma = 0.95$.}
    \label{fig:sim1or2}
\end{figure}

Suppose now that success is defined when efficacy is established in the primary endpoint and at least one of the two secondary endpoints. Figure~\ref{fig:simcompos} reports type I error rate in the first column and BCEP in the second column for the set $\Omega_{1(2|3)}$, that is, when the primary endpoint is outcome 1 and the secondary endpoints are outcomes 2 and 3. Both the IUT-Holm procedure and the adjusted Bayesian POS control type I error rate, while power is virtually identical between the two methods across all correlations except low positive, where the Holm procedure has marginally lower BCEP. Similar to the previous case, unadjusted Bayesian POS does not control FWER, but adjusted and unadjusted Bayesian POS have indistinguishable FWER. 

\begin{figure}
    \includegraphics[width = \textwidth, height=\textheight, keepaspectratio]{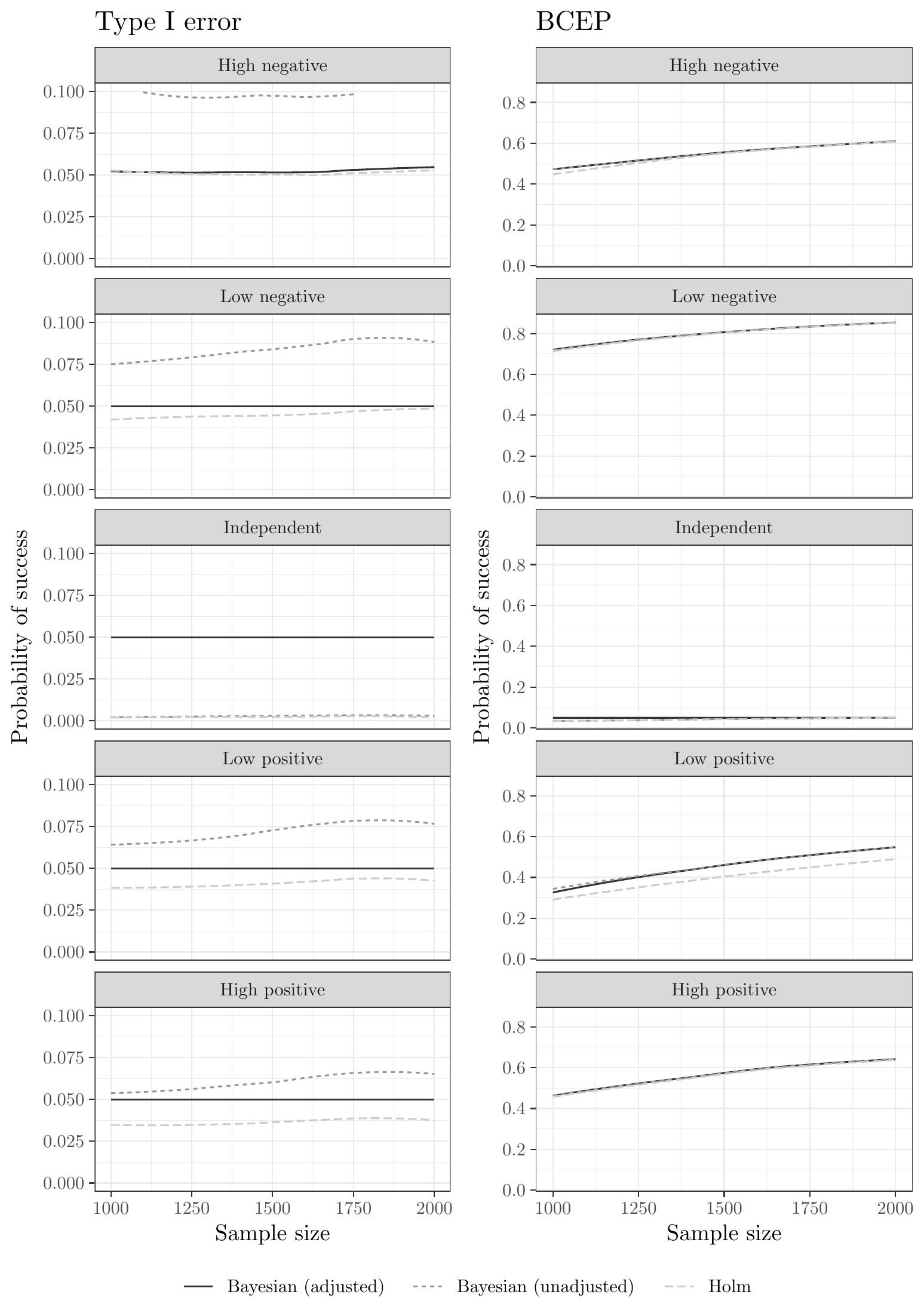}
    \caption{Type I error rate and Bayesian conditional expected power (BCEP) of adjusted and unadjusted fully Bayesian POS with a hybrid Holm procedure where $\Omega_{1(2|3)}$ defines the set that determines success and when $\gamma = 0.95$.}
    \label{fig:simcompos}
\end{figure}

The simulations suggest that our method is uniformly more powerful than the Holm procedure. While we do not present a formal proof, a heuristic argument is that our method asymptotically provides exact FWER of $1-\gamma$, while the Holm procedure only guarantees FWER no more than $1 - \gamma$. That is, the Holm procedure may be conservative.

\section{Data Application}
In Phase 1 of the COMPASS study, a diverse collection of 40 North Carolina, USA hospitals were randomized in a 1:1 allocation scheme to administer their patients the COMPASS intervention or SOC. The hospitals that continued to Phase 2 of the study attempted to continue providing the intervention with minimal resources provided by the study team. Phase 1 hospitals that were randomized to deliver SOC to their patients were transitioned to provide the intervention.

The primary outcome was stroke impact scale (SIS-16), which is a measurement between 0 and 100 indicating physical function. There were several secondary endpoints in the study. The continuous endpoints we utilize are self-reported general health and physical health. Self-reported general health was measured by an approximately continuous 5-point Likert scale. Physical health was measured using the PROMIS Global Health Scale \citep{hays_development_2009}, which is a continuous measure of overall physical health, mental health, social health, pain, fatigue, and overall perceived quality of life, and was collected in only Phase 2 of the study. As a result, we only consider Phase 2 data in our analysis. For the purposes of demonstration of our method, we partition the full data between a vanguard site (considered the older historical data $D_{01}$) and the remaining data (considered the newer historical data $D_{02}$).

We assume the following model for each outcome:
\[
    y_{ij} = \beta_0 + \bm{\beta}_1 z_i + \bm{x}_{i}' \bm{\beta}_{2j} + \epsilon_{ij},
\]
where $\bm{\epsilon}_i = (\epsilon_{i1}, \epsilon_{i2}, \epsilon_{i3})'$ are independent and identically distributed (i.i.d.) $\N_3(\bm{0}, \bm{\Sigma})$ where $\bm{\Sigma}$ is a $3 \times 3$ covariance matrix assumed to be positive definite, $z_i = 1$ if subject $i$ received the eCare plan intervention and 0 otherwise, $\bm{x}_{i}$ is a vector of control variables, which include stroke history, TIA history, linear and quadratic terms of age, race (white or non-white), severity of stroke, whether the patient had insurance (to control for socioeconomic status), and a binary variable indicating whether the patient was hospitalized due to a stroke or a transient ischemic attack (TIA). In order to reduce the skew and asymmetry of the distribution of the outcome, we obtained the log transform of the stroke impact scale. Bayesian analysis of the SUR model was conducted via the surbayes package (2020) \nocite{noauthor_surbayes_nodate} in R. 

Define $\Omega_{1} = \{ \beta_{11} > 0 \}$, $\Omega_{123} = \{ \beta_{11} > 0 \} \cap \{ \beta_{12} > 0 \} \cap \{ \beta_{13} < 0 \}$, and $\Omega_{1(2|3)} = \{ \beta_{11} > 0 \} \cap \left(\{ \beta_{12} > 0 \} \cup \{ \beta_{13} < 0 \} \right)$. These sets correspond to achieving success in the primary endpoint, in all three endpoints, and in the primary endpoint and at least one of the two secondary endpoints, respectively. For sensitivity of power prior parameter specification, we checked all possible combinations of power prior parameters and future sample sizes for the following: 
$n      \in \{3000, 3100, \ldots, 4000 \}$, 
$b_{01} \in \{0, 0.25, 0.5, 0.75, 1 \}$, 
$b_{02} \in \{0.25, 0.5, 0.75, 1 \}$, 
$a_0    \in \{ 0, 0.25, 0.5, 0.75, 1 \}$,
resulting in $11 \times 5 \times 4 \times 5 = 1,100$ scenarios. We set the number of future data sets $B = 10,000$ and posterior sample size $M = 10,000$. 

Table~\ref{tab:pos1} shows adjusted POS, $\POS^*$ (henceforth referred to as simply POS), for $n = 4,000$ across the differing power prior parameters. POS computations were not sensitive to the choice of the power prior parameters. Across all simulation scenarios, POS for the primary endpoint is over 80 percent. However, is only just over 40 percent for achieving all three endpoints, and it is above 70 percent for achieving success in the primary outcome and at least one of the two secondary outcomes. This suggests that a future study with high probability of success to meet all 3 endpoints would require an extremely large sample size, but only a moderately larger sample size would be required to achieve 80\% probability of success in meeting the primary endpoint and at least one secondary endpoint.

\begin{table}[ht]
\let\center\empty
\let\endcenter\relax
\centering
\resizebox{\textwidth}{!}{\begin{tabular}{lccccccccccccccc}
\toprule
 & \multicolumn{15}{c}{$a_0$} \\ \cmidrule(lr){2-16}
 & \multicolumn{3}{c}{0} & \multicolumn{3}{c}{0.25} & \multicolumn{3}{c}{0.5} & \multicolumn{3}{c}{0.75} & \multicolumn{3}{c}{1} \\ \cmidrule(lr){2-4}\cmidrule(lr){5-7}\cmidrule(lr){8-10}\cmidrule(lr){11-13}\cmidrule(lr){14-16}
  & $\Omega_1$ & $\Omega_{123}$ & $\Omega_{1|23}$ & $\Omega_1$ & $\Omega_{123}$ & $\Omega_{1|23}$ & $\Omega_1$ & $\Omega_{123}$ & $\Omega_{1|23}$ & $\Omega_1$ & $\Omega_{123}$ & $\Omega_{1|23}$ & $\Omega_1$ & $\Omega_{123}$ & \multicolumn{1}{c}{$\Omega_{1|23}$} \\ 
\midrule
$(b_{01}, b_{02}) = (0, 0.25)$  & $0.824$ & $0.429$ & $0.724$ & $0.852$ & $0.456$ & $0.757$ & $0.855$ & $0.455$ & $0.758$ & $0.858$ & $0.458$ & $0.757$ & $0.855$ & $0.454$ & $0.754$ \\
$(b_{01}, b_{02}) = (0, 0.5)$  & $0.825$ & $0.433$ & $0.733$ & $0.854$ & $0.454$ & $0.758$ & $0.858$ & $0.452$ & $0.751$ & $0.857$ & $0.454$ & $0.757$ & $0.858$ & $0.455$ & $0.759$ \\
$(b_{01}, b_{02}) = (0, 0.75)$  & $0.820$ & $0.430$ & $0.725$ & $0.857$ & $0.456$ & $0.760$ & $0.859$ & $0.453$ & $0.757$ & $0.860$ & $0.462$ & $0.758$ & $0.854$ & $0.456$ & $0.757$ \\
$(b_{01}, b_{02}) = (0, 1)$  & $0.825$ & $0.432$ & $0.730$ & $0.851$ & $0.455$ & $0.752$ & $0.853$ & $0.453$ & $0.756$ & $0.855$ & $0.453$ & $0.752$ & $0.857$ & $0.455$ & $0.756$ \\
$(b_{01}, b_{02}) = (0.25, 0.25)$  & $0.824$ & $0.433$ & $0.728$ & $0.858$ & $0.457$ & $0.760$ & $0.855$ & $0.454$ & $0.755$ & $0.854$ & $0.453$ & $0.756$ & $0.858$ & $0.459$ & $0.760$ \\
$(b_{01}, b_{02}) = (0.25, 0.5)$  & $0.827$ & $0.434$ & $0.726$ & $0.854$ & $0.456$ & $0.757$ & $0.853$ & $0.445$ & $0.751$ & $0.853$ & $0.449$ & $0.755$ & $0.853$ & $0.455$ & $0.754$ \\
$(b_{01}, b_{02}) = (0.25, 0.75)$  & $0.821$ & $0.433$ & $0.725$ & $0.859$ & $0.450$ & $0.760$ & $0.849$ & $0.450$ & $0.749$ & $0.858$ & $0.457$ & $0.757$ & $0.856$ & $0.454$ & $0.759$ \\
$(b_{01}, b_{02}) = (0.25, 1)$  & $0.826$ & $0.426$ & $0.727$ & $0.854$ & $0.449$ & $0.758$ & $0.853$ & $0.457$ & $0.755$ & $0.854$ & $0.449$ & $0.754$ & $0.854$ & $0.451$ & $0.754$ \\
$(b_{01}, b_{02}) = (0.5, 0.25)$  & $0.823$ & $0.432$ & $0.726$ & $0.854$ & $0.451$ & $0.752$ & $0.855$ & $0.456$ & $0.754$ & $0.854$ & $0.451$ & $0.752$ & $0.855$ & $0.452$ & $0.754$ \\
$(b_{01}, b_{02}) = (0.5, 0.5)$  & $0.828$ & $0.434$ & $0.731$ & $0.855$ & $0.452$ & $0.755$ & $0.857$ & $0.457$ & $0.758$ & $0.855$ & $0.449$ & $0.752$ & $0.855$ & $0.454$ & $0.755$ \\
$(b_{01}, b_{02}) = (0.5, 0.75)$  & $0.825$ & $0.430$ & $0.730$ & $0.854$ & $0.455$ & $0.752$ & $0.858$ & $0.453$ & $0.754$ & $0.857$ & $0.457$ & $0.761$ & $0.854$ & $0.453$ & $0.752$ \\
$(b_{01}, b_{02}) = (0.5, 1)$  & $0.823$ & $0.432$ & $0.723$ & $0.856$ & $0.456$ & $0.751$ & $0.857$ & $0.452$ & $0.757$ & $0.855$ & $0.452$ & $0.755$ & $0.859$ & $0.452$ & $0.753$ \\
$(b_{01}, b_{02}) = (0.75, 0.25)$  & $0.829$ & $0.440$ & $0.727$ & $0.857$ & $0.452$ & $0.758$ & $0.855$ & $0.453$ & $0.752$ & $0.857$ & $0.450$ & $0.758$ & $0.854$ & $0.448$ & $0.751$ \\
$(b_{01}, b_{02}) = (0.75, 0.5)$  & $0.831$ & $0.441$ & $0.733$ & $0.856$ & $0.449$ & $0.758$ & $0.857$ & $0.453$ & $0.758$ & $0.854$ & $0.450$ & $0.753$ & $0.853$ & $0.446$ & $0.754$ \\
$(b_{01}, b_{02}) = (0.75, 0.75)$  & $0.820$ & $0.428$ & $0.727$ & $0.856$ & $0.450$ & $0.752$ & $0.857$ & $0.452$ & $0.754$ & $0.858$ & $0.449$ & $0.758$ & $0.859$ & $0.450$ & $0.753$ \\
$(b_{01}, b_{02}) = (0.75, 1)$  & $0.823$ & $0.430$ & $0.729$ & $0.858$ & $0.451$ & $0.756$ & $0.858$ & $0.454$ & $0.755$ & $0.857$ & $0.454$ & $0.755$ & $0.852$ & $0.451$ & $0.750$ \\
$(b_{01}, b_{02}) = (1, 0.25)$  & $0.821$ & $0.433$ & $0.725$ & $0.853$ & $0.447$ & $0.752$ & $0.856$ & $0.454$ & $0.757$ & $0.856$ & $0.448$ & $0.755$ & $0.856$ & $0.451$ & $0.750$ \\
$(b_{01}, b_{02}) = (1, 0.5)$  & $0.830$ & $0.434$ & $0.735$ & $0.857$ & $0.452$ & $0.758$ & $0.855$ & $0.452$ & $0.759$ & $0.857$ & $0.453$ & $0.758$ & $0.853$ & $0.448$ & $0.757$ \\
$(b_{01}, b_{02}) = (1, 0.75)$  & $0.826$ & $0.436$ & $0.731$ & $0.855$ & $0.450$ & $0.756$ & $0.852$ & $0.449$ & $0.745$ & $0.853$ & $0.451$ & $0.752$ & $0.851$ & $0.443$ & $0.748$ \\
$(b_{01}, b_{02}) = (1, 1)$  & $0.822$ & $0.426$ & $0.729$ & $0.856$ & $0.453$ & $0.754$ & $0.855$ & $0.451$ & $0.757$ & $0.854$ & $0.448$ & $0.752$ & $0.851$ & $0.445$ & $0.750$ \\
\bottomrule 
\end{tabular}

}
\caption{Adjusted probability of success across different criteria of success for the COMPASS data set for $n = 4,000$. $\Omega_1$, $\Omega_{123}$, and $\Omega_{1|23}$ respectively refer to achieving success in the first endpoint alone, for all three endpoints, and for the first endpoint and at least one of the two secondary endpoints, respectively. The quantity $a_0$ gives the power prior parameter for the fitting prior, and $b_{01}$ and $b_{02}$ respectively refer to the power prior parameters for the covariates in the validation prior for the older historical and newer historical data set. Zero parameters indicate the corresponding historical data set was not used in the computations.}
\label{tab:pos1}
\end{table}

Since our choice of power prior parameters is not sensitive to the resulting POS, we proceed with the discussion of the results only considering power prior parameters of 0 and 1. Figure~\ref{fig:datapos} shows the POS curves for the COMPASS data. Across all three definitions of success, POS is several percentage points higher when using the older historical data set in the fitting prior ($a_0 = 1$) than when using an uninformative validation prior ($a_0 = 0)$. The covariate validation prior parameters $b_{01}$ and $b_{02}$ have only a marginal impact on POS. The sharp, upward sloping curve for achieving success in all three endpoints, $\Omega_{123}$, indicates that the return for increasing sample size for this endpoint could be high. However, the sample size to achieve any reasonable level of success would likely need to be prohibitively large.

\begin{figure}
    \centering
    \includegraphics[width = \textwidth, height=\textheight, keepaspectratio]{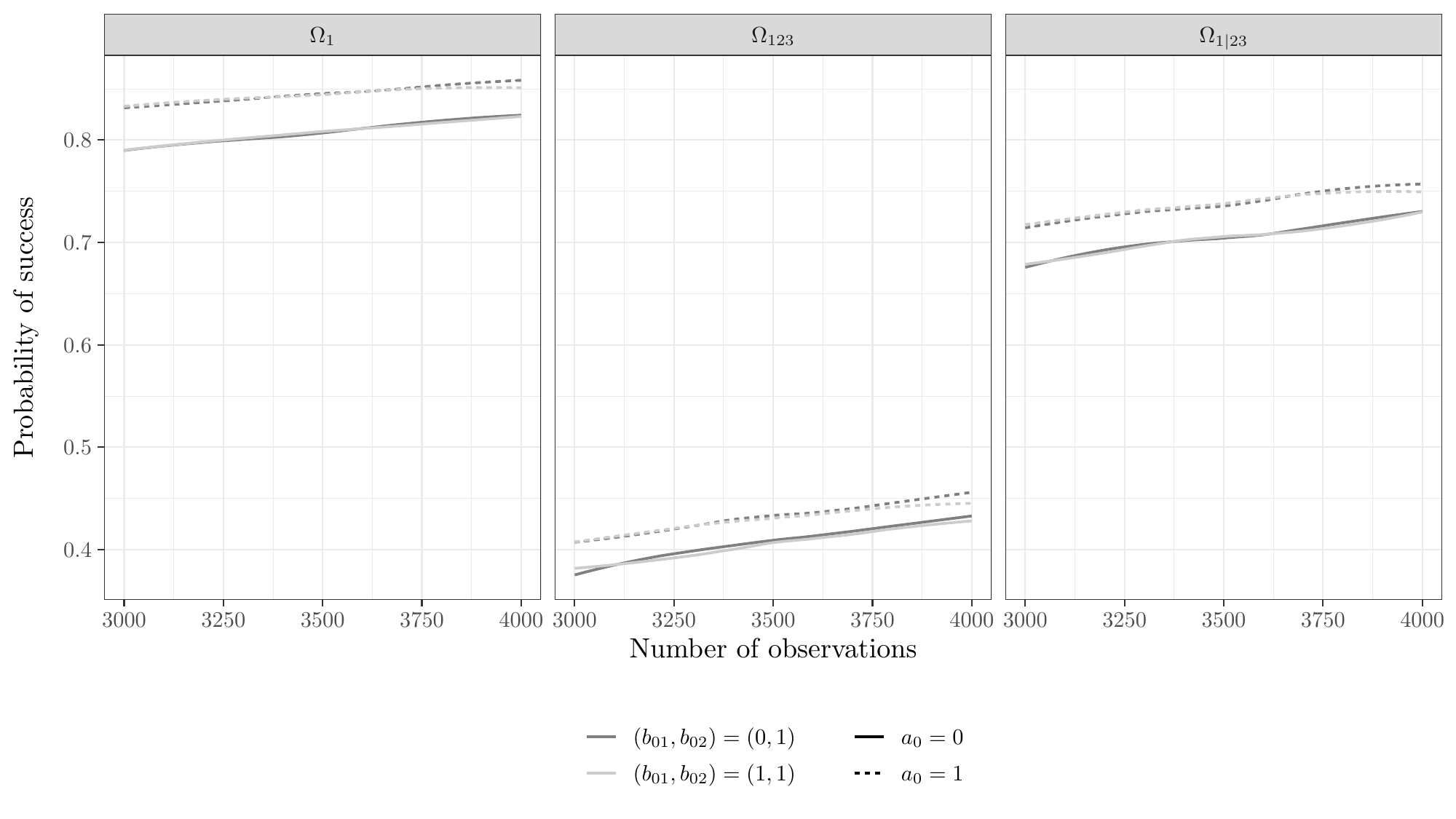}
    \caption{Adjusted probability of success curve for the COMPASS data. The quantities $b_{01}$ and $b_{02}$ denote the power prior hyperparameters for the older and newer historical data sets, respectively, when fitting the validation prior for the covariates. The hyperparameter $a_0$ is the power prior parameter for the older historical data set in the fitting prior. The solid lines indicate the hyperparameter $a_0 = 0$ and the dashed lines indicate $a_0 = 1$. Power prior parameters of 0 indicate that the historical data set was unused.}
    \label{fig:datapos}
\end{figure}

\section{Discussion}
\label{sec:disc}
There are several advantages to our approach compared to previous POS approaches (e.g., \citet{ibrahim_bayesian_2015} and \citet{chuang-stein_sample_2006}). First, the SUR model allows clinicians to have different covariates across endpoints. For example, a practitioner may wish to control for the baseline level of the endpoint of interest, but not the baseline level of other endpoints. Second, by using sets to define success, our method makes no distinction between testing simple or multiple hypotheses. This unification of testing lends itself to much simpler interpretation than adjusted $p$-values, i.e., POS always refers to expected power across a range of plausible effect sizes.

For practitioners, it is worth asking when an analysis consiting of marginal regressions should be conducted as opposed to a joint analysis. \citet{berry_bayesian_1999} argue that it is more realistic to view parameters as dependent in multiple testing situations. In particular, knowledge of the value of one parameter should influence the value of other parameters when study data are correlated, and they argue it seldom makes sense to analyze the data independently. Moreover, we have established a Bayesian approach that asymptotically guarantees strict control of FWER regardless of the posterior correlation between the parameters. It is important to differentiate our method from a Bayesian posterior model probability, although the two are related. While \citet{scott_bayes_2010} argue that the only way to control for multiplicity in a Bayesian perspective is via prior model probabilities, our approach does not rely on prior model probabilities, which are typically subjective; nor does it require all priors in the model to be proper.

The method may also be applied when historical data sets are small. One may keep only the $q^{th}$ highest posterior density (HPD) region of the treatment effects when sampling from the validation prior. This may be done by either evaluating the posterior likelihood or utilizing kernel density estimation (KDE) to obtain a marginal joint posterior for just the treatment effects. A description of these adjustments is provided in Appendix C. We also apply these validation prior adjustment methods based on a simulated historical data set. A comparison of their effects is presented in Figure \ref{fig:ivacaftor}. Figure~\ref{fig:ivacaftor} suggests that adjusting the validation prior using HPD regions tends to increase POS when there is high power to detect an effect. When the power is low, adjusting the validation prior has only a marginal effect on POS. Similarly, the power prior parameters of the covariate distribution has a marginal effect on overall POS.

We close by emphasizing that the proposed method is useful regardless of whether a Bayesian or classical frequentist analysis will ultimately be used to analyze the future study. If several endpoints are being measured, the measurements will be correlated. Thus, it would be unintuitive to compute POS at each endpoint individually--one must sample from the validation prior at the joint level to account for correlations in the response variables.



\section*{Acknowledgements}
This research was supported in part by a grant from the National Institute of Environmental Health Sciences (T32ES007018).




\bibliography{references}

\section*{Appendix}


\subsection*{Appendix A: Full conditional distributions for the power prior for the SUR model}
Suppose that we have historical data $D_0 = \{(\bm{y}_{0j}, \bm{X}_{0j}), j = 1, \ldots, J \}$. Let $\bm{y}_0 = (\bm{y}_{01}, \ldots, \bm{y}_{0J})'$ and let $\bm{X}_0 = \blkdiag\{ \bm{X}_{01}, \ldots, \bm{X}_{0J} \}$ The fitting prior is given by
\begin{align}
    \pi^{(f)}(\bm{\beta}, \bm{\Sigma} | \pi_{10}^{(f)}, a_0) &\propto 
      \left[ 
        \lvert \bm{\Sigma} \rvert^{n_0 / 2} 
        \exp\left\{
          -\frac{1}{2} (\bm{y}_0 - \bm{X}_0 \bm{\beta})' 
            (\bm{\Sigma}^{-1} \otimes {\bm{I}}_{n_0})
            (\bm{y}_0 - \bm{X}_0 \bm{\beta})
        \right\}
      \right]^{a_0}
      \pi_{10}^{(f)}(\bm{\beta}, \bm{\Sigma}).
\end{align}
\subsubsection*{The conditional posterior of $\bm{\beta}$}

Let $\bm{QQ}' = a_0 \bm{\Sigma}^{-1} \otimes \bm{I}_{n_0}$ represent the Cholesky decomposition and let $\bm{y}_0^* = \bm{Q}'\bm{y}_0$ and $\bm{X}_0^* = \bm{Q}' \bm{X}_0$. If we specify $\pi^{(f)}_{10}(\bm{\beta}, \bm{\Sigma})$ as (\ref{eq:initvalidprior}), then
\begin{align}
    \pi^{(f)}(\bm{\beta} | \bm{\Sigma}, a_0) 
    &\propto \N_p \left( \bm{\beta} | \hat{\bm{\beta}}_0, \hat{\bm{\Sigma}}_{\bm{\beta}_0} \right),
     \label{eq:condbetafitprior2}
\end{align}
where $\N_p(\cdot | \bm{a}, \bm{B})$ is the $p$-dimensional multivariate normal distribution with mean $\bm{a}$ and covariance matrix $\bm{B}$, and \\
$\hat{\bm{\Sigma}}_{\bm{\beta}_0} = a_0^{-1} \left[ \bm{X}_0' \left( \bm{\Sigma}^{-1} \otimes \bm{I}_{n_0} \right) \bm{X}_0 \right]^{-1}$
and
$\hat{\bm{\beta}}_0 = \left[ \bm{X}_0' \left( \bm{\Sigma}^{-1} \otimes \bm{I}_{n_0} \right) \bm{X}_0 \right]^{-1} \bm{X}_0' \left( \bm{\Sigma}^{-1} \otimes \bm{I}_{n_0}  \right) \bm{y}_0$.

Using a similar argument, we can then write the likelihood of the new data as
\begin{equation}
    L(\bm{\beta} |\bm{y}, \bm{X}, \bm{\Sigma}) \propto 
      \exp\left\{ -\frac{1}{2}
        (\bm{y}^* - \bm{X}^* \bm{\beta})'
        (\bm{y}^* - \bm{X}^* \bm{\beta})
        \label{eq:condlikebeta}
    \right\}
\end{equation}
where $\bm{y}^* = \bm{R}'\bm{y}$,
$\bm{X}^* = \bm{R}'\bm{X}$,
and $\bm{R}$ is the lower triangular matrix from the Cholesky decomposition of $\bm{\Sigma} \otimes \bm{I}_{n}$. The likelihood (\ref{eq:condlikebeta}) is just a normal linear regression likelihood with unit variance. The posterior of the normal linear regression model with a normal prior has been well studied. Thus, the conditional posterior distribution of $\bm{\beta}$ is given by
\[
    \pi(\bm{\beta} | \bm{y}, \bm{\Sigma}, a_0, \pi^{(f)} ) =
      \N_p \left(\bm{\beta} | \tilde{\bm{\beta}}, \tilde{\bm{\Sigma}}_{\beta} \right),
\]
where
${\tilde{\bm{\Sigma}}}_{\bm{\beta}} = %
\left[ %
\bm{X}' \left( \bm{\Sigma}^{-1} \otimes \bm{I}_n \right) \bm{X} + %
a_0^{-1} \bm{X}'_0 \left( \bm{\Sigma}^{-1} \otimes \bm{I}_{n_0} \right) \bm{X}_0 %
\right]^{-1}$
and \\
$\tilde{\bm{\beta}} = \tilde{\bm{\Sigma}}_{\bm{\beta}} \left[ \bm{X}' \left( \bm{\Sigma}^{-1} \otimes \bm{I}_n \right) \bm{y} + \bm{X}_0'(\bm{\Sigma}^{-1} \otimes \bm{I}_{n_0}) \bm{y}_0 \right]$.

\subsubsection*{The conditional posterior of $\bm{\Sigma}$}
Suppose that the initial fitting prior is specified as (\ref{eq:initvalidprior}). Then
\begin{align}
    \pi(\bm{\Sigma}^{-1} &| \bm{\beta}, D, D_0, a_0)  \notag \\
    &= \lvert \bm{\Sigma}^{-1} \rvert^{n/2} \exp\left\{ -\frac{1}{2} \tr(\bm{R} \bm{\Sigma}^{-1}) \right\} \times
      \lvert \bm{\Sigma}^{-1} \rvert^{a_0 n_0/2} \exp\left\{ -\frac{1}{2} \tr(a_0 \bm{R}_0 \bm{\Sigma}^{-1}) \right\} \times
      \lvert \bm{\Sigma}^{-1} \rvert ^{-(J + 1) / 2} \\
      &= \lvert \bm{\Sigma}^{-1} \rvert ^{(n + a_0 n_0 - J - 1) / 2 }
         \exp\left\{ -\frac{1}{2} \tr\left( \left[\bm{R} + a_0 \bm{R}_0 \right] \bm{\Sigma}^{-1} \right) \right\} \\
      &\propto \W_{J}(\bm{\Sigma}^{-1} | \tilde{\bm{V}}, \tilde{\nu}), \label{eq:condpostsigma}
\end{align}
where $W_{J}(\cdot | \bm{C}, d)$ is the probability density function for a $J$-dimensional Wishart distribution with scale matrix $C$ and $d$ degrees of freedom, 
$\tilde{\bm{V}} = \tilde{\bm{V}}(\bm{\beta}) = \left[ \bm{R}(\bm{\beta}) + a_0 \bm{R}_0(\bm{\beta}) \right]^{-1}$,
$\tilde{\nu} = n + a_0 n_0$,
and $\bm{R}$ and $\bm{R}_0$ are the matrices of dot-product residuals for the future data and historical data, respectively, as described in (\ref{eq:sur2}).

\subsection*{Appendix B: Multiplicity adjustment for union hypotheses}

\begin{theorem}
\noindent Let $\bm{\nu} = (\bm{\theta}', \bm{\eta}')'$ be a $p$-dimensional vector of parameters where we are interested in subsets of the $J$-dimensional vector $\bm{\theta}$ and $\bm{\eta}$ is a $p - J$ vector of nuisance parameters. Suppose we obtain data of size $n$, denoted by $D_n$, from the likelihood function $L(\bm{\nu} | D_n)$ Let the posterior density of $\bm{\nu}$ be given by
\begin{align}
    \pi^{(v)}(\bm{\nu} | D_n) \propto L(\bm{\nu} | D_n) \pi^{(v)}_{10}(\bm{\nu}) 1\{ \bm{\theta} \in \Theta_0 \},
    \label{eq:vpriorproof}
\end{align}
where $\pi^{(v)}_{10}(\bm{\nu})$ is a prior on $\alpha$ constrained to make make the posterior $\pi(\bm{\nu} | D_n)$ proper and $\Theta_0 = \{ \bm{\theta} : \theta_{j_k} = \mu_{j_k}, k = 1, \ldots, K \}$ for some $K \le J$. Henceforth, we will assume that the first $K$ elements of $\bm{\theta}$ are the ones for which we wish to conduct inference. Further, let $\pi^{(f)}(\bm{\nu})$ denote the fitting prior. Define POS based on $K$ events as
\begin{align}
    \hbox{POS}(\cup_{k=1}^K E_k) = \int 
    P\left[ P(\cup_{k=1}^K E_k | D, \pi^{(f)}) \ge \gamma \right] \pi^{(v)}(\bm{\nu} | D_n)
    d \bm{\nu}
    ,
\end{align}
where $E_k = \{ \theta_k > \mu_k \}$ for $k = 1, \ldots, K$. Then $\hbox{POS} \to 1 - \gamma^* \ge 1 - \gamma$, with strict inequality holding whenever $\rho_{kl} = \lim_{n \to \infty} \Corr(\theta_k, \theta_l | D_n) \not \to 1$ for every $k = 1, \ldots, K$, $k \ne l$, as $n \to \infty$.
\end{theorem}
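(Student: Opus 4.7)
The plan is to combine the Bernstein--von Mises (BvM) theorem with a pointwise monotonicity argument and an analysis of multivariate orthant probabilities. First, I would apply BvM to the fitting-prior posterior: since the validation prior (\ref{eq:vpriorproof}) restricts $\bm{\theta}_{1:K}=\bm{\mu}$, the data $D$ of size $n$ are generated from a true parameter on the null boundary, so the posterior law of $\bm{\theta}_{1:K}$ is asymptotically $\N(\hat{\bm{\theta}}^{\mathrm{MLE}}_{1:K},\hat{\bm{V}}/n)$ and $P(E_k\mid D,\pi^{(f)})\approx\Phi(Z_k^{(n)})$ with $Z_k^{(n)}:=\sqrt{n}(\hat{\theta}_k^{\mathrm{MLE}}-\mu_k)/\hat{\sigma}_k$. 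A frequentist CLT for the MLE then yields $(Z_1^{(n)},\ldots,Z_K^{(n)})\Rightarrow\bm{W}\sim\N(\bm{0},\bm{R})$ with $R_{kl}=\rho_{kl}$ matching the limiting posterior correlation (again by BvM). By the continuous mapping theorem, $P(\cup_k E_k\mid D,\pi^{(f)})\Rightarrow 1-\Phi_K(-\bm{W};\bm{R})$, where $\Phi_K(\cdot;\bm{R})$ is the $K$-variate standard normal CDF with correlation matrix $\bm{R}$. Hence $\hbox{POS}(\cup_k E_k)\to 1-\gamma^*:=P\{1-\Phi_K(-\bm{W};\bm{R})\ge\gamma\}$.

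The lower bound $1-\gamma^*\ge 1-\gamma$ would follow from monotonicity. For every $k$, $P(\cup_{l=1}^K E_l\mid D)\ge P(E_k\mid D)$, so $\mathbf{1}\{P(\cup_l E_l\mid D)\ge\gamma\}\ge\mathbf{1}\{P(E_k\mid D)\ge\gamma\}$ pointwise, and integrating under the validation prior gives $\hbox{POS}(\cup_l E_l)\ge \hbox{POS}(E_k)$. Marginally, $\Phi(Z_k^{(n)})\Rightarrow\Phi(W_k)\sim\mathrm{Uniform}(0,1)$, so $\hbox{POS}(E_k)\to 1-\gamma$ for each $k$, yielding the claim.

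For the equality/strict-inequality dichotomy I would analyze the limiting random variable $U:=1-\Phi_K(-\bm{W};\bm{R})$. When $\rho_{kl}\to 1$ for every pair $k\ne l$, the components of $\bm{W}$ coincide almost surely, so $U=\Phi(W_1)$ and $P(U\ge\gamma)=1-\gamma$ exactly, giving the equality case. Otherwise some $|\rho_{kl}|<1$, and I would exhibit an open set $G\subset\mathbb{R}^K$ of positive $\N(\bm{0},\bm{R})$-measure on which $\Phi(w_j)<\gamma$ for every $j$ yet $1-\Phi_K(-\bm{w};\bm{R})>\gamma$; non-singularity of $\bm{R}$ then forces $P(U\ge\gamma)>1-\gamma$. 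The main obstacle is constructing $G$ quantitatively: the natural choice is a small ball around $(z_\gamma-\epsilon,\ldots,z_\gamma-\epsilon)$ for appropriately small $\epsilon>0$, combined with an inclusion--exclusion expansion of $\Phi_K(-\bm{w};\bm{R})$ and continuity of $\Phi_K$ in $\bm{R}$ to show that as soon as $\bm{R}$ departs from the all-ones matrix the union probability strictly exceeds every marginal probability, so the two inequalities can be satisfied on the same open region. A short induction on $K$ then upgrades the pairwise statement to the general case stated in the theorem.
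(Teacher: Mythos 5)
Your proposal is correct in substance but reaches the conclusion by a genuinely different route than the paper. The paper's proof applies Bernstein--von Mises to get $\POS(E_k)\to 1-\gamma$ for each marginal event and then argues by induction on $K$ using the inclusion--exclusion identity $\POS(E_1\cup E_2)=\POS(E_1)+\POS(E_2)-\POS(E_1\cap E_2)$ applied at the level of POS itself, bounding the intersection term above by $\alpha$. You instead (i) identify the exact limit of the posterior union probability as the Gaussian orthant functional $1-\Phi_K(-\bm{W};\bm{R})$, (ii) obtain the lower bound from the pointwise monotonicity $1\{P(\cup_l E_l\mid D)\ge\gamma\}\ge 1\{P(E_k\mid D)\ge\gamma\}$, and (iii) settle the equality/strict-inequality dichotomy by analyzing that limit directly. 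Your step (ii) is arguably the more airtight of the two lower-bound arguments: set inclusion of events passes through the nonlinear threshold indicator $1\{\cdot\ge\gamma\}$, whereas additivity of posterior probabilities does not, so the paper's inclusion--exclusion identity for $\POS(\cdot)$ holds only heuristically (or asymptotically after further justification), while your monotonicity inequality holds exactly for every finite $n$. What your route costs is the extra work in step (iii): your proposed open set $G$ centered at $(z_\gamma-\epsilon)\bm{1}$ may miss the support of $\N(\bm{0},\bm{R})$ when $\bm{R}$ is singular but not the all-ones matrix (e.g.\ $\rho_{12}=-1$ puts all mass on the antidiagonal), so for degenerate $\bm{R}$ you should fall back on the cruder bound $P(U\ge\gamma)\ge P(\max_j W_j\ge z_\gamma)$, which strictly exceeds $1-\gamma$ whenever some $W_j$ differs from $W_1$ with positive probability and dispenses with constructing $G$ altogether. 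One further subtlety shared by both arguments and worth a sentence in a write-up: the limiting correlation matrix $\bm{R}$ generally depends on the nuisance parameters drawn from the validation prior, so $1-\gamma^*$ is really an expectation of the orthant functional over that draw; since the bound holds for each fixed $\bm{R}$, it survives the averaging.
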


\begin{proof}
  By the Bernstein-von Mises theorem, $\bm{\nu} | D_n \to \N_p(\hat{\bm{\nu}}, I^{-1}(\hat{\bm{\nu}})) $ as $n \to \infty$, where $\hat{\bm{\nu}} = (\hat{\bm{\theta}}', \hat{\bm{\eta}}')'$ is the posterior mode of $\bm{\nu}$ and $I(\bm{\nu}) = -\lim_{n \to \infty} n^{-1}\dfrac{\partial^2 \log L(\bm{\alpha} | D_n)}{\partial \bm{\nu} \partial \bm{\nu}'}$ is the asymptotic Hessian matrix. Thus, $(\theta_1, \ldots \theta_K)' \to \N_K(\hat{\bm{\theta}}_K, \bm{\Sigma}_K)$ as $n \to \infty$, where $\hat{\bm{\theta}}_K$ is the first $K$ elements of $\bm{\hat}{\bm{\theta}}$ and $\bm{\Sigma}_K$ is the upper $K \times K$ matrix of $\bm{I}^{-1}(\bm{\nu})$. Utilizing (\ref{eq:vpriorproof}), $\hat{\bm{\theta}}_K = \bm{\mu} = (\mu_1, \ldots, \mu_K)'$ and hence, $P(E_k | D_n) \to 0.5$ as $n \to \infty$ for every $k = 1, \ldots, K$. It follows that
  \begin{align}
      \POS(E_k) := \int P[ P(E_k | D, \pi^{(f)}) \ge \gamma] \pi^{(v)}(\bm{\nu} | D_n) d \bm{\nu} \to 1 - \gamma := \alpha
      \label{eq:marginalconv}
  \end{align}
  for every $k = 1, \ldots, K$.
  
 \noindent \underline{Base case}: $K = 2$. Note that
  \begin{align}
      \POS(E_1 \cup E_2) &= \sum_{k=1}^2 \POS(E_k) - \POS(E_1 \cap E_2).
      \label{eq:twoway}
  \end{align}
  By the (\ref{eq:marginalconv}), the quantity in the sum of (\ref{eq:twoway}) converges to $2\alpha$. Since $(E_1 \cap E_2) \subset E_k$ for $k = 1,2$, $\POS_{1,2} \to \alpha^* \le \alpha$. Due to the definitions of $E_1$ and $E_2$, $\alpha^* = \alpha$ if and only if $\rho_{12} = 1$. Hence, $\POS \ge 1 - \gamma$ for $K = 2$ with strict inequality holding whenever $\rho_{12} \ne 1$.
  
  \noindent \underline{Induction step:} Suppose $\POS(\cup_{k=1}^K E_k) \ge 1 - \gamma = \alpha$ for some $k = 2, \ldots, K \le J$ with strict inequality holding whenever $\rho_{jk} \ne 1$ for every $j = 1, \ldots, K$ and $k \ne j$. Then
  \begin{align}
      \POS(\cup_{k=1}^{K+1} E_k) &= \int P\left[ P(\cup_{k=1}^{K+1} E_k | D, \pi^{(f)}) \ge \gamma \right] \pi^{(v)}(\bm{\nu} | D_n)
    d \bm{\nu} \\
    &=  \int P\left[ P(\{\cup_{k=1}^{K} E_k\} \cup E_{K+1} | D, \pi^{(f)}) \ge \gamma \right] \pi^{(v)}(\bm{\nu} | D_n)
    d \bm{\nu} \\
    &= \POS(\cup_{k=1}^K E_k) + \POS(E_{K+1}) - \POS(E_{K+1} \cap \cup_{k=1}^K E_k).
    \label{eq:inductionpos}
  \end{align}
 By the induction hypothesis, $\POS(\cup_{k=1}^K E_k) \ge \alpha$ with strict inequality holding whenever $\rho_{kl} \ne 1$ for any $k \ne l, k = 1, \ldots, K$. By (\ref{eq:marginalconv}), $\POS(E_{K+1}) \to \alpha$. Thus, it is clear from (\ref{eq:inductionpos}) that $\POS(K+1) \to \alpha^* \ge \alpha$ for some $\alpha^*$ since $\POS_{K+1} \ge \POS_{(1 | ,\ldots, | K), K+1}$. Now, suppose $\rho_{kl} = 1$ for $k \ne l, j = 1, 2, \ldots, K$ so that $\POS(K) = \alpha$. Note that $P(\cup_{k=1}^K E_k \cap E_{K+1}) \le P(E_{K+1})$ with strict inequality holding if and only if $E_{K+1} = E_{K+1} \cap (\cup_{k=1}^K E_k)$. Asymptotically, this is equivalent to $\rho_{k,K+1} \ne 1$ for some $k = 1, \ldots, K$.
\end{proof}

\begin{theorem}
  Begin with the same setup as in the previous theorem. Define the function $\POS_{\text{adj}}$ by
  \begin{align}
      \POS_{\text{adj}}(\cup_{k=1}^K E_k) = \sum_{k=1}^K \left( (-1)^{k-1} \sum_{\substack{I \subset \{ 1, \ldots, K \} \\ |I| = k}} \POS^*(E_I) \right),
  \end{align}
  where the equality holds by the inclusion / exclusion principle, $E_I = \cap_{i \in I} E_i$, and $\POS^*(E_I) := \max\{ 1 - \gamma, \POS(E_I) \}$. Then $\POS_{adj}(\cup_{k=1}^K E_k) \le 1 - \gamma = \alpha$.
  
  \begin{proof}
    Note that under the specified validation prior, $\POS^*(E_I) = \max\{ 1 - \gamma, \POS(E_I) \} \to 1 - \gamma$ since, as shown in the previous theorem, $\POS(E_I) \le 1 - \gamma$. Hence,
    \begin{align}
        \POS_{adj}(\cup_{k=1}^K E_k) 
          &= \sum_{k=1}^K \left( (-1)^{k-1} \sum_{\substack{I \subset \{ 1, \ldots, k \} \\ |I| = k}} \POS^*(E_I) \right) \\
        &= \sum_{k=1}^K \left( (-1)^{k-1} \sum_{\substack{I \subset \{ 1, \ldots, k \} \\ |I| = k}} (1 - \gamma) \right) \\
        &= (1 - \gamma) \sum_{k=1}^K \left( (-1)^{k-1} \sum_{\substack{I \subset \{ 1, \ldots, k \} \\ |I| = k}} 1 \right) \\
        &= (1 - \gamma) \sum_{k=1}^K  (-1)^{k-1} \binom{K}{k} \\
        &= (1 - \gamma) (1) \\
        &= 1 - \gamma
    \end{align}
  \end{proof}
\end{theorem}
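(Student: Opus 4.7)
The plan is to reduce the theorem to a single monotonicity fact about intersections, then let inclusion-exclusion and a binomial identity do the rest. First I would note that for any nonempty $I \subset \{1, \ldots, K\}$ and any realized future data set $D$, the intersection event satisfies $P(E_I \mid D, \pi^{(f)}) \le P(E_i \mid D, \pi^{(f)})$ for each $i \in I$. Consequently $1\{P(E_I \mid D, \pi^{(f)}) \ge \gamma\} \le 1\{P(E_i \mid D, \pi^{(f)}) \ge \gamma\}$ pointwise, and taking expectations under the validation prior (\ref{eq:vpriorproof}) yields $\POS(E_I) \le \POS(E_i)$. Since Theorem 1 and equation (\ref{eq:marginalconv}) give $\POS(E_i) \to 1-\gamma$ under the null validation prior, I obtain $\limsup_{n \to \infty} \POS(E_I) \le 1 - \gamma$ for every nonempty $I$.

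Next I would combine this with the clamp $\POS^*(E_I) = \max\{1 - \gamma, \POS(E_I)\}$ to deduce $\POS^*(E_I) \to 1 - \gamma$ for every such $I$. The asymptotic uniformity across all $I$ (of which there are only finitely many, since $K$ is fixed) permits me to substitute the common limit $1 - \gamma$ into the inclusion-exclusion representation of $\POS_{\text{adj}}(\cup_{k=1}^K E_k)$. Pulling the common factor out, the quantity reduces to
\begin{equation*}
(1 - \gamma) \sum_{k=1}^K (-1)^{k-1} \binom{K}{k} = (1-\gamma) \cdot 1 = 1 - \gamma,
\end{equation*}
where the binomial sum equals $1$ by the usual identity $\sum_{k=0}^K (-1)^k \binom{K}{k} = 0$. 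This gives the stated bound $\POS_{\text{adj}}(\cup_{k=1}^K E_k) \le 1 - \gamma$, with asymptotic equality.

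The main obstacle will be the first step: making sure $\POS(E_I) \le 1-\gamma$ asymptotically is the \emph{correct} direction of inequality needed for the max-clamp to bite. If any intersection probability $\POS(E_I)$ could asymptotically exceed $1-\gamma$, then $\POS^*(E_I) = \POS(E_I)$ for that $I$ and the algebraic cancellation collapses; so the argument relies crucially on the validation prior (\ref{eq:vpriorproof}) pinning every marginal posterior to its null-boundary value $0.5$, which forces intersection events to be no larger than marginals. A secondary concern is justifying that the limit passes inside the finite inclusion-exclusion sum; because $K$ is fixed while $n \to \infty$, this is harmless, but I would make the finiteness of the index set explicit to avoid any ambiguity.
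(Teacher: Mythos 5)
Your proposal is correct and follows essentially the same route as the paper: show every intersection term satisfies $\POS(E_I) \le 1-\gamma$ asymptotically so that the clamp forces $\POS^*(E_I) \to 1-\gamma$, then substitute the common limit into the inclusion-exclusion sum and apply the alternating binomial identity. The only difference is that you justify $\POS(E_I) \le 1-\gamma$ explicitly via the pointwise monotonicity of posterior probabilities under set inclusion, whereas the paper simply cites the previous theorem (whose base case contains the same subset argument for pairwise intersections); your version is a harmless and slightly more complete spelling-out of the same idea.
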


\begin{corollary}
  Let $\{ F_m \}_{m=1}^M$ be a family of sets of the form
  \begin{align}
      F_m = \cup_{i_m = 1}^{I_m} E_{i_m},
  \end{align}
  where $E_{i}$ is defined as in the theorems above. Suppose we are interested in computing POS based on $\Omega = \cap_{m=1}^M F_m$. Then $\POS_{adj}(\Omega) \le 1 - \gamma$.
  
  \begin{proof}
    Note that we can write
    \begin{align}
        \Omega &= \cap_{m=1}^M F_m \\
               &= \cap_{m=1}^M \cup_{i_m = 1}^{I_m} E_{i_m} \\
               &= \cup_{i_1=1}^{I_1} \cdots \cup_{i_m = 1}^{I_m}
                  \cap_{m=1}^M E_{i_m}
    \end{align}
    The result is immediate by applying Theorem A.2 and the inclusion / exclusion principle.
  \end{proof}
\end{corollary}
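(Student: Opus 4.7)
The plan is to reduce the corollary to Theorem A.2 by distributing intersection over union. Starting from $\Omega = \cap_{m=1}^M \cup_{i_m=1}^{I_m} E_{i_m}$, I would apply the distributive law $M-1$ times to obtain
\[
\Omega = \bigcup_{(i_1, \ldots, i_M)} \bigcap_{m=1}^M E_{i_m},
\]
where the outer union ranges over all tuples in $\{1,\ldots,I_1\} \times \cdots \times \{1,\ldots,I_M\}$. Thus $\Omega$ is a union of $N := \prod_{m=1}^M I_m$ compound events $\tilde{E}_{\bm{i}} := \cap_{m=1}^M E_{i_m}$, indexed by tuples $\bm{i}$. The strategy is to treat these compound events as the ``base'' events and invoke Theorem A.2.

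Next I would verify the hypothesis required by Theorem A.2: for every non-empty sub-collection $S$ of compound-event indices, the capped quantity $\POS^*(\cap_{\bm{i} \in S} \tilde{E}_{\bm{i}})$ must converge to $1-\gamma$. Any such intersection is itself an intersection of finitely many basic events $E_i = \{\theta_i > \mu_i\}$, and hence is a subset of each individual $E_i$ it contains. By the Bernstein-von Mises argument from Theorem A.1, $\POS(E_i) \to 1-\gamma$ under the null-restricted validation prior, and the monotonicity $P(\cap E_i \mid D, \pi^{(f)}) \le P(E_i \mid D, \pi^{(f)})$ pushes $\POS(\cap_{\bm{i} \in S} \tilde{E}_{\bm{i}})$ below $1-\gamma$ asymptotically, so the cap in the definition of $\POS^*$ pins the limit at $1-\gamma$.

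Armed with this, I would repeat the inclusion-exclusion computation from Theorem A.2 verbatim, with the compound events $\tilde{E}_{\bm{i}}$ playing the role of the base events. Factoring the common limit $1-\gamma$ out of every term collapses the alternating sum to $(1-\gamma)\sum_{k=1}^{N} (-1)^{k-1}\binom{N}{k} = 1-\gamma$, yielding the claimed bound $\POS_{\text{adj}}(\Omega) \le 1-\gamma$.

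The main obstacle, as I see it, is a consistency check rather than any hard estimation: $\POS_{\text{adj}}$ must produce the same quantity regardless of whether the adjustment is applied to the original basic events $E_i$, to the outer unions $F_m$, or to the distributed compound events $\tilde{E}_{\bm{i}}$. This reduces to the set-theoretic observation that intersections of compound events are again intersections of basic events (up to removing repeated factors by idempotence), so the family of events touched by inclusion-exclusion coincides across the viewpoints. Once that bookkeeping is settled, the corollary follows mechanically from Theorem A.2 applied to the distributed representation.
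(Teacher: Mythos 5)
Your proposal follows essentially the same route as the paper's proof: distribute intersection over union to write $\Omega$ as a union of compound intersection events, then apply the inclusion--exclusion argument of the preceding theorem with those compound events as the base events. The only difference is that you explicitly verify that $\POS^*$ of any intersection of compound events still converges to $1-\gamma$ (via monotonicity of posterior probabilities), a step the paper dismisses as ``immediate''; your version is correct and simply more complete.
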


\textbf{Appendix C: A note on small historical data sets}\\
\\
While using uninformative priors may be suitable for studies such as the COMPASS study, where the historical data set sample size is large, this is relatively uncommon in clinical trials. Typically, Phase II trials will consist of no more than a few hundred participants. Oftentimes, these trials will consist of only several dozens of participants. The validation prior (\ref{eq:validprior}) would be unsuitable for small historical data sets since the posterior variance of the treatment effect would be quite large. 

One potential solution is to replace (\ref{eq:initvalidprior}) with an informative initial validation prior. However, doing so requires prior knowledge on the treatment effect, which is unrealistic in Phase II trials particularly for novel drugs. Moreover, if the chosen prior is informative in $\bm{\Sigma}$, the direct Monte Carlo method cannot be used. Another solution would be to only consider Bayesian conditional expected power by constraining the validation prior to sample only values that indicate efficacy. However, the large variance for small sample sizes might result in large, unrealistic values for the treatment effect, and one may grossly overestimate or underestimate POS.

Instead of the previous methods, we propose to subset the $q^{th}$ highest posterior density (HPD) region of the posterior samples. The primary advantage of this method is it is trivial to conduct: simply take a large number of samples of 
$(\bm{\beta}, \bm{\Sigma})$, compute the logarithm of the posterior for each sample using (\ref{eq:postfit1}), and keep the $q$\% of samples that yield the largest values for this log posterior. However, if the number of parameters is large (which is typically the case for models with more than 2 endpoints), one might have to choose a minuscule $q$ in order to constrain realistic values for the posterior samples of the treatment effects. Doing this will restrict samples from the validation prior to be more reasonable than what would be obtained without adjustment.

A second solution is to use kernel density estimation (KDE) to estimate the marginal joint posterior density of the treatment effects, taking the largest $q$\% of the samples in terms of the density. The primary advantage of this method over the log posterior method HPD method mentioned above is that it allows one to focus on the treatment effects. However, the method is much more computationally intensive, and can take quite a long time if the number of samples and/or the number of endpoints is large. 

We apply both of these solutions to a simulated historical data set based on summary statistics from a Phase II Cystic Fibrosis trial analyzing the safety of Ivacaftor, formerly known as VX-770 \citep{accurso_effect_2010}, available on clinicaltrials.gov. The phase II study consisted of 16 patients (8 participants received placebo, 8 received treatment). The Phase III study enrolled 69 patients \citep{moss_efficacy_2015}. The primary endpoint for the Phase III study was absolute change from baseline in percent predicted forced expiratory volume (FEV1) in 1 second through week 24. Two of the continuous secondary endpoints measured in Phase II and Phase III were change from baseline in sweat chloride and change from baseline in the Cystic Fibrosis questionnaire-Revised (CFQ-R) score. The values for the treatment effects were, respectively, $\bm{\beta}_1 = (6.4, -49.1, 3.5)'$, and the standard deviations for each endpoint were $(\sigma_1, \sigma_2, \sigma_3) = (5.12, 12.26, 7.05)$. Covariates, including linear and squared terms of age, weight, body mass index (BMI), and gender, were included in the models, but each their effects on treatment were assumed to be 0 (i.e., $\bm{\beta}_2 = \bm{0}$). Bayesian analysis of the SUR Model was conducted via the surbayes package (2020) in R \nocite{noauthor_surbayes_nodate}.

The results for some of success criteria are depicted in Figure~\ref{fig:ivacaftor}. Using a HPD region adjustment typically increases the likelihood of success. However, as CFQ-R is underpowered, the effect of adjusting the validation prior does not always result in increasing POS. The different hyperparameters for the power prior for the covariates resulted in only marginal differences in POS.

\renewcommand\thefigure{A\arabic{figure}} 
\setcounter{figure}{0}   
\begin{figure}[H]
    \centering
    \includegraphics[width = \textwidth, height=\textheight, keepaspectratio]{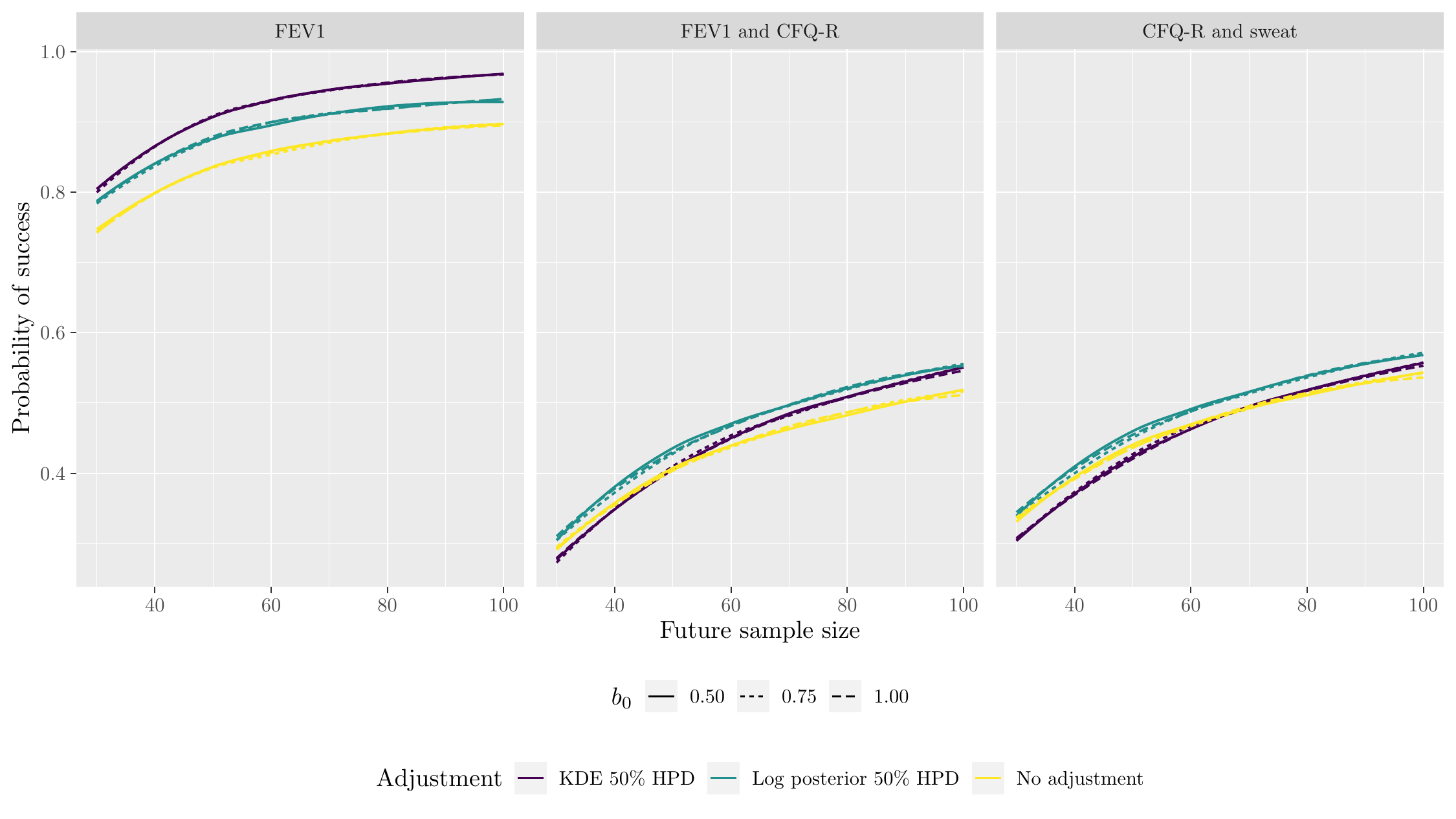}
    \caption{Effects on adjusting the validation prior on POS for the simulated Ivacaftor data when $\gamma = 0.95$. Three different success definitions were used: the primary endpoint (FEV1), the primary endpoint and CFQ-R, and the two secondary endpoints jointly. The quantity $b_0$ refers to the power prior parameter used for the covariate distribution.}
    \label{fig:ivacaftor}
\end{figure}

\end{document}